\newcommand{\ignore}[1]{}
\renewcommand{\l}{\ell}
\newtheorem{theorem}{Theorem}
\newtheorem{prop}[theorem]{Proposition}
\newtheorem{corollary}[theorem]{Corollary}
\newtheorem{definition}{Definition}
\newtheorem{remark}{Remark}
\renewcommand{\le}{\leqslant}
\renewcommand{\ge}{\geqslant}
\newcommand{\eps}{\varepsilon}
\renewcommand{\epsilon}{\varepsilon}
\newcommand{\vnote}[1]{}
\newcommand{\anote}[1]{}
\newcommand{\F}{\mathbb{F}}
\newcommand{\R}{\mathbb{R}}
\renewcommand{\H}{\mathcal{H}}
\newcommand{\A}{{\mathcal{A}}}
\newcommand{\au}{almost universal}
\newcommand{\se}{storage enforceable}
\title{{\bf Almost Universal Hash Families are also Storage Enforcing}}
\author{Mohammad Iftekhar Husain\and Steve Ko\and Atri Rudra\footnote{Supported in part by NSF CAREER grant CCF-0844796}\and Steve Uurtamo\footnotemark[1]}
\date{Department of Computer Science and Engineering,\\
University at Buffalo, SUNY,\\
Buffalo, NY, 14214.\\
{\{imhusain,stevko,atri,uurtamo\}@buffalo.edu}}
\begin{document}
\maketitle

\setcounter{page}{0}
\thispagestyle{empty}

\begin{abstract}
We show that every almost universal hash function also has the \textit{storage enforcement} property. Almost universal hash functions have found numerous applications and we show that this new storage enforcement property allows the application of almost universal hash functions in a wide range of remote verification tasks: (i) Proof of Secure Erasure (where we want to remotely erase and securely update the code of a compromised machine with memory-bounded adversary), (ii) Proof of Ownership (where a storage server wants to check if a client has the data it claims to have before giving access to deduplicated data) and (iii) Data possession (where the client wants to verify whether the remote storage server is storing its data). Specifically, storage enforcement guarantee in the classical data possession problem removes any practical incentive for the storage server to cheat the client by saving on storage space.

The proof of our result relies on a natural combination of Kolmogorov Complexity and List Decoding. To the best of our knowledge this is the first work that combines these two techniques. We believe the newly introduced storage enforcement property of almost universal hash functions will open promising avenues of exciting research under memory-bounded (bounded storage) adversary model.
\end{abstract}

\noindent
\textbf{Keywords:} Kolmogorov Complexity, List Decoding, Almost Universal Hash family, Data Possession, Proof of Retrievability, Proof of Ownership, Reed-Solomon Codes, CRT codes.

\newpage

\section{Introduction}

Universal hashing was defined in the seminal paper of Carter and Wegman~\cite{CW79} and since then has been an integral part of both complexity and algorithms research. In this paper we consider the following well known generalization of universal hashing called $\eps$-\au~\cite{au}. In particular, a family $\mathcal H=\{H_1,\dots,H_n\}$ of hash functions (where $H_i:[q]^k \rightarrow [q]$)\footnote{Our results can also handle the case where each each $H_i:[q]^k\rightarrow [q_i]$ for potentially different $q_i$ for each $i\in [n]$. For simplicity, we will concentrate on the case of $q_i=q$ for every $i\in [n]$.} is an $\eps$-\au\ hash family if for every $x\neq y\in [q]^k$, we have
\[\Pr_{i\in [n]}\left[ h_i(x)=h_i(y)\right]\le \eps.\]
($1/q$-\au\ hash families are universal hash families as defined in~\cite{CW79}.)

In this paper, we show that any $\eps$-\au\ hash family also has the ``storage enforcement" property. 
In particular, we call a hash function family $\mathcal H=\{H_1,\dots,H_n\}$ to be $(\gamma,f(\cdot))$-\se\ if the following holds. A \textit{prover} claims to have $x\in [q]^k$. The prover is allowed to perform arbitrary computation on $x$ and retain its output $y\in [q]^*$ with it. Then the following holds for any $x\in [q]^k$: if the prover can compute\footnote{We will assume the prover uses an algorithm $\A_x$ to compute $\A_x(i,y)$ as its version of $h_i(x)$.} $h_i(x)$ with probability at least $\gamma$ (for a uniformly random $i\in [n]$), then $|y|\ge f(x)$.

Our main result is the following:

\begin{theorem}
\label{thm:au-se}
If a computable hash function family $\mathcal H=\{H_1,\dots,H_n\}$ is $\eps$-\au, then it is also $(\sqrt{\eps},f(\cdot))$-\se, where $f(x)\approx C(x)$, where $C(x)$ is the plain Kolmogorov complexity of $x$.
\end{theorem}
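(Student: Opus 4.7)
The plan is to interpret the hash family as an error-correcting code, apply list decoding via the Johnson bound, and then upper-bound $C(x)$ by encoding $x$ through its index in the decoded list.

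First, I would view the encoding $x \mapsto c(x) := (h_1(x), h_2(x), \ldots, h_n(x)) \in [q]^n$ as a code. The $\varepsilon$-\au\ condition is precisely the statement that for every $x \neq x'$, the codewords $c(x)$ and $c(x')$ agree on at most an $\varepsilon$ fraction of coordinates, so this code has relative Hamming distance at least $1-\varepsilon$. By the (combinatorial) Johnson bound, any such code is list-decodable at fractional agreement $\sqrt{\varepsilon}$ (more precisely, at agreement $\sqrt{\varepsilon}+\tau$ for any small slack $\tau>0$), with a list size $L$ that is polynomial in $n$ and $1/\tau$. Because $\mathcal H$ is computable, the list decoder is also computable by brute-force enumeration of $[q]^k$; efficiency is irrelevant for a Kolmogorov bound.

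Second, I would use this list decoder to compress $x$. Suppose the prover retains $y$ and an algorithm $\A$ with $\Pr_{i\in[n]}[\A(i,y) = h_i(x)] \ge \sqrt{\varepsilon}$. Define the received word $w \in [q]^n$ by $w_i := \A(i,y)$. Then $w$ and $c(x)$ agree on at least $\sqrt{\varepsilon}\,n$ coordinates, so $x$ appears among the at most $L$ messages returned by the list decoder on input $w$. A universal Turing machine can therefore reconstruct $x$ from (i) the string $y$, (ii) an index $j \in [L]$ identifying $x$ inside the decoded list, and (iii) fixed descriptions of $\mathcal H$ and $\A$. This gives
$$C(x) \;\le\; |y| + \log L + c_{\mathcal H,\A},$$
and rearranging yields $|y| \ge C(x) - \log L - c_{\mathcal H,\A} =: f(x)$, where the additive gap $\log L + c_{\mathcal H,\A} = O(\log n + \log(1/\varepsilon))$ is precisely the slack hidden in $f(x) \approx C(x)$.

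The main obstacle I anticipate is dealing with randomized $\A$ together with the boundary behavior of the Johnson bound. For randomness, a standard averaging argument fixes a coin string $r^\ast$ preserving $\sqrt{\varepsilon}$-agreement, but $r^\ast$ must be encoded into the Kolmogorov description; one way is to pass to the hardwired algorithm $\A(\cdot,\cdot;r^\ast)$ and absorb the cost into $c_{\mathcal H,\A}$, noting that for a fixed \emph{non-uniform} description of $\A$ we only need existence, not uniformity. The Johnson boundary is the other delicate point: at radius exactly $1-\sqrt{\varepsilon}$ the list size can blow up, so the theorem is really proved with an infinitesimal slack $\tau$ (or by allowing success probability strictly exceeding $\sqrt{\varepsilon}$), again with $\log(1/\tau)$ absorbed into the $\approx$. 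A minor secondary point is bookkeeping the description of $\mathcal H$, which by the computability hypothesis is a fixed constant independent of $x$.
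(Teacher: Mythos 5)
Your proposal is correct and follows essentially the same route as the paper: view the hash family as a code of relative distance $1-\eps$, invoke the Johnson bound to get $(1-\sqrt{\eps},L)$-list decodability, and then bound $C(x)$ by describing $x$ via $y$, the (brute-force) list decoder, and a $\log L$-bit index into the list. Your worry about slack at the Johnson radius is unnecessary in the paper's formulation, since the version of the Johnson bound it uses gives list size $2qn$ exactly at radius $1-\sqrt{1-d/n}\ge 1-\sqrt{\eps}$ (and the paper assumes a deterministic prover algorithm, so the coin-fixing step is also not needed there).
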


The storage enforcement property is interesting in its own right and we present its applications in problems of proof of secure erasure, proof of ownership and data possession.
 
Before discussing the applications of storage enforcement, we present the
techniques we use to prove Theorem~\ref{thm:au-se}, which to the best of our
knowledge is the first instance of combining list
decoding~\cite{List-Decoding-Orig-Elias, List-Decoding-Orig-Wozencraft} and
Kolmogorov complexity~\cite{Kolmogorov-Complexity-Orig,
Kolmogorov-Complexity-Recent} naturally to give interesting results. Both
Kolmogorov complexity (see the textbook~\cite{Kolmogorov-Complexity-Recent})
and list decoding (see the survey by Sudan~\cite{sudan-sigact} and Guruswami's
thesis~\cite{G-thesis}) have found numerous applications in complexity theory.
We believe that this combination merits more exploration and hope that this work
will lead to a more systematic study.

\paragraph{Our techniques.} 



We begin our discussion with the following related result. Let $H:[q]^k\rightarrow [q]^n$ be an error-correcting code
with good \textit{list decodability} properties-- see Section~\ref{sec:coding} for coding definitions. (We do not need an algorithmic guarantee, as just a combinatorial guarantee suffices.) Consider the natural hash family $\mathcal H=\{h_1,\dots,h_n\}$, where $h_i(x)=H(x)_i$. We now argue that $\mathcal H$ has storage enforcement property.
In particular, we quickly sketch why the prover cannot get away with storing a vector $y$ such that $|y|$ is smaller than $C(x)$, the plain Kolmogorov complexity of $x$, by some appropriately small additive factor. Since we are assuming that the prover uses an algorithm $\A_x$ to compute its answer $\A_x(\beta,y)$ to the random challenge $\beta\in [n]$, if the prover's answer is correct with probability at least $\gamma$, then note that the vector $(\A_x(\beta,y))_{\beta\in [n]}$ differs from $H(x)$ in at most $1-\gamma$ fraction of positions. Thus, if $H$ has good list decodability, then (using $\A_x$) one can compute a list $\{x_1,\dots,x_L\}$ that contains $x$. Finally, one can use $\log{L}$ bits of advice (in addition to $y$) to output $x$. This procedure then becomes a description for $x$ and if $|y|$ is sufficiently smaller than $C(x)$, then our description will have size $<C(x)$, which contradicts the definition of $C(x)$.

The above result then implies Theorem~\ref{thm:au-se} because (1) Every $\eps$-\au\ hash family defines a code $H$ (in exactly the same way as we defined corresponding hash family $\H$ from the code $H$ in the paragraph above) with (relative) distance at least $1-\eps$; (2) Every code $H$ with relative distance $1-\eps$ can be (combinatorially) list decoded from $1-\sqrt{\eps}$ fraction of errors by the Johnson bound (cf.~\cite{G-thesis}).


From a more practical point of view, 
the Karp Rabin hash corresponds to $H$ being so-called the ``Chinese Remainder
Theorem" (CRT) code and the polynomial hash corresponds to $H$ being the
Reed-Solomon code. Reed-Solomon and CRT codes have good list decodability, which
implies that our protocols can be implemented using the classic Karp Rabin and
polynomial hashes.

%
%


Next, we present three applications of hash functions that have storage enforcement.

\paragraph{Proof of Secure Erasure.} 
%
Often, it is important to verify the internal state of a remote embedded device
such as a sensor, actuator or computer peripheral with limited memory. This is
to assure that it is running the intended code and not a malicious or arbitrary
code. When any suspicious behavior of the remote device is detected, sometimes
it becomes necessary to erase the entire memory content and then update it with
legitimate code. Two crucial components of this process is to make sure that (i)
the remote device has to store a  random string at least as large as the memory
size and (ii) efficiently verify that the device has indeed stored the random
string  remotely. A similar problem has been studied before, under the name of
\textit{Proof of Secure Erasure}~\cite{crypto-storage-enforcement} using
cryptographic primitives which is, in most cases, too expensive for
resource-constrained embedded devices. However, almost universal hash functions
with the storage enforcement property applies naturally to provide an efficient
proof of secure erasure scheme as outlined below.

The prover can use any \se\ hash family, along with a randomly-generated string $x$, (which will, with constant probability greater than $1/2$, have $C(x)\ge (|x|-O(1))$)\footnote{The exact bound is very strong. $C(x)\ge (|x|-r)$ for at least a $(1-1/{2^r})$-fraction of the strings of length $|x|$.}, to force the remote device (verifier) to first store a string of length close to $|x|$ (overwriting any malicious code as long as $|x|$ is chosen large enough), second, to correctly answer a verification request (a failure to answer the request proves that the initial erasure has failed), and third (now that there is no room for a malicious code) to install the updated code.

Apart from efficiency issues, the cited work~\cite{crypto-storage-enforcement} also assumes that the remote device stores some portion of its trusted code in a fixed ROM. However, a universal hash function based scheme with storage enforcement is not strictly dependent on such an assumption. We do, however, (as does \cite{crypto-storage-enforcement}) require that the remote machine not have network access to an external storage  which is a very reasonable assumption in memory bounded adversary model. 


\paragraph{Proof of Ownership.} Consider the case when a client wants to upload it's data
$x$ to a storage service (such as Dropbox) that uses `deduplication.' In
particular, if $x$ is already stored on the server, then the server will ask the
client not to upload $x$ and give it ownership of $x$. To save on communication,
the sever asks the client to send it a hash $h(x)$ and if it matches the hash of the stored $x$ on the server, the client is issued an ownership of $x$. As identified by \cite{Halevi-Ownership}, this simple deduplication scheme can be abused for malicious purposes. For example,
if a malicious user gets hold of the hashes of the files stored on the server (through server break-in or unintended leakage), it will get access to those files. The simple deduplication scheme can also be exploited as a unintended content distribution network. A malicious user can upload a potentially large file and share the hash of the file with accomplices. Now, all these accomplices can present the hash to storage server and get access to large file as if it is a content distribution network. A \se\  hash function can address such a situation. Specifically, if one has a hash family $\{h_1,\dots,h_n\}$ that is \se, then the server can pick a random $i\in [n]$ and ask the client to send it $h_i(x)$. If the client succeeds in doing so then, the server knows the client has a lot of information (close to $C(x)$) related to $x$ and not just a small fingerprint of $x$.

Proof of ownership is studied formally in \cite{Halevi-Ownership}. In addition to the scheme being computation intensive, the security guarantee in \cite{Halevi-Ownership} is based on the min-entropy of the original file dealing with the probability distribution of input data blocks, which we believe (as also agreed by the authors of \cite{Halevi-Ownership}) is not the best measure to enforce storage. We do not make any assumptions on what kind of data we are storing. Due to this, Kolmogorov complexity is a better option as it is defined for each specific block of data, which is at the core of our solution. Also, majority of the existing data possession schemes (as discussed in the next application) could not be used to address proof of ownership. Because, they require the prover (client) to store $x$ in a modified form which is an infeasible assumption for this application.

\paragraph{Data Possession.} This application considers the following
problem: A string $x\in [q]^k$ is held by the client/user, then transmitted to a
remote server (or broken up into several pieces and transmitted to several
remote servers). At some later point, after having sent $x$, the client would
like to know that the remote server(s) is storing $x$ correctly. 
The problem has been studied under the
name of \textit{proof of retrievability} (cf.~\cite{JuelsPOR,BowersPOR,DVW09})
and/or \textit{data possession}
(cf.~\cite{AteniesePDP,AtenieseSEPDP,CurtmolaMRPDP,ErwayDPDP}). 
Given the greater prevalence of outsourcing of storage (e.g. in ``cloud computing''),
such a verification procedure is crucial for auditing purposes to make sure that
the remote servers are following the terms of their agreement with the client.

In contrast with the existing schemes, our approach to the remote server adversarial model is rather practical. We assume that the storage provider is not fully malicious: e.g., it is reasonable to assume that Amazon will not try and mess with an individual's storage
in a malicious way. However, the server will not be completely honest either. We
make the natural assumption that the servers' main goal is to save on its storage
space (since it practically helps the server to save on the cost of storage medium, networking and power): so if it can pass the verification protocol by storing substantially less than $|x|$ amounts of data, then it will do so.\footnote{In other words, the server is
willing to spend a lot of computational power to save on its space usage but is
not maliciously trying to cheat the client.} In our scheme, we will strive to
make this impossible for the server to do without having a high probability of
getting caught. 

Also, before we move on to our scheme, we would like to make a point that does not seem to have been made explicitly before. Note that we cannot prevent a server from compressing their copy of the string, or otherwise representing it in any reversible way. (Indeed, the user would not care as long as the server is able to recreate $x$.) This means that a natural upper bound on the amount of storage we can force a server into is $C(x)$, the (plain) Kolmogorov complexity of $x$, which is the size of the smallest (algorithmic) description of $x$.

Our scheme to address the data possession problem is determined by a $(\gamma,f(\cdot))$-\se\ hash family $\{h_1,\dots, h_n\}$. In the pre-processing step, the client picks a random $i\in [n]$, stores $(i,h_i(x))$ and ships off $x$ to the server. During the verification step, the client send $i$ and asks for $h_i(x)$. Upon receiving the reply $z$ from the server, the client checks whether $z=h_i(x)$.
Since the hash family was \se,  if the server passes the verification protocol with probability $\gamma>0$, then they provably have to store $f(x)$ (which in our results will be $C(x)$ up to a very small \textit{additive} factor) bits of data. A cheating server is allowed to use arbitrarily powerful algorithms (as long as they terminate) while responding to challenges from the user.\footnote{The server(s) can use different algorithms for different strings $x$ but the algorithm cannot change across different challenges for the same $x$.} Further, every honest server only needs to store $x$. In other words, unlike some existing results, our protocol does not require the server(s) to store a massaged version of $x$. In practice, this is important as we do not want our auditing protocol to interfere with ``normal" read operations. However, unlike many of the existing results based upon cryptographic primitives, our protocol can only allow a number of audits that is proportional to the user's local storage.\footnote{An advantage of proving security under cryptographic assumptions is that one can leverage existing theorems to prove additional security properties of the verification protocol. We do not claim any additional security guarantees other than the ability of being able to force servers to store close to $C(x)$ amounts of data.} Further,  as a demonstration of the versatility of our technique, it'll turn out that our schemes can also provide a proof of retrievability (see e.g.
Remark~\ref{rem:por}), but we do not claim any improvement over existing work on proofs of retrievability, e.g.~\cite{DVW09}.
In general, our single prover scheme is very similar to one of the schemes discussed in ~\cite{DVW09}. For example, one of the instantiations of our scheme with Reed-Solomon codes in fact resembles exactly one of their scheme. However, some crucial differences in technical contribution of the schemes are the following. First, even though both schemes use list decodability of codes, the schemes differ in how one ``prunes" the output of list decoding. The scheme in~\cite{DVW09} uses extra hash functions while our work uses Kolmogorov complexity. Second and perhaps more importantly, we believe that the main contribution of our work is to formally establish the storage enforcement properties for every almost universal hash function, and showing a range of applications of these \se\  hash functions.

Our techniques are very flexible. As mentioned earlier, our protocols also imply a proof of retrievability. As a further demonstration of this flexibility, we show that our results seamlessly generalize to the multiple server case. We show that we can allow for arbitrary collusion among the servers if we are only interested in checking if at least one sever is cheating (but we cannot identify it). If we want to identify at least one cheating server (and allow for servers to not respond to challenges), then we can handle a moderately limited collusion among servers. 

Theorem~\ref{thm:au-se} along with the above then implies that any \au\ hash family implies a valid verification protocol for the data possession problem. Our protocol with best parameters, which might need the user and honest server(s) to be exponential time algorithms, provably achieves the optimal local storage for the user and the optimal total communication between the user and the server(s). 
By picking an appropriate \au\ hash family, with slightly non-optimal storage and communication parameters, the user and the honest server(s) can work with single pass logarithmic space data stream algorithms.

\section{Preliminaries}
\label{sec:prelim}

We now formally define different parameters of a verification protocol based on \se\   hash functions. For the secure erasure application, the administrator/user is the verifier and the remote embedded device is the prover. In the proof of ownership problem, the deduplication server is the verifier and the client is the prover. For the data possession problem, the client is the verifier and the storage server is the prover. For clarity, rest of the discussion will focus on the data possession application to accomodate both single prover and multiple prover cases.

\paragraph{Basic Verification Scheme.}

We use $\mathbb{F}_{q}$ to denote the
finite field over $q$ elements. We also use $[n]$ to denote the set
$\{1,2,..,n\}$. Given any string $x\in [q]^*$, we use $|x|$ to denote the length of $x$ in bits. Additionally, all logarithms will be base 2 unless otherwise specified. We use $U$ to denote the \textit{verifier}. 

For the data possession problem, we define the verification protocol in terms of multiple prover case first and then discuss single prover as a specific case for ease of dicussion. We assume that $U$ wants to store its data $x\in [q]^k$ among $s$ service providers (or provers) $P_1,\dots,P_s$. For the proof of ownership problem, the client (or the prover) $P_1$ has the data $x$. In the preprocessing step, the provers $\mathcal P=\{P_1,\dots,P_s\}$ get $x$ divided up equally among the $s$ provers -- we will denote the chunk for prover $i\in [s]$ as $x_i\in [q]^{n/s}$.\footnote{We will assume that $s$ divides $n$. In our results for the case when $H$ is a linear code, we do not need the $x_i$'s to have the same size, only that $x$ can be partitioned into $x_1,\dots,x_s$. We will ignore this possibility for the rest of the paper.} (For the data possession problem, we can assume that $U$ does this division and sends $x_i$ to $P_i$.) Each prover is then allowed to apply any computable function to its chunk and to store a string $y_i\in [q]^*$. Ideally, we would like $y_i=x_i$. However, since the provers can compress $x$, we would at the very least like to force $|y_i|$ to be as close to $C(x_i)$ as possible. For notational convenience, for any subset $T\subseteq [s]$, we denote $y_T$ ($x_T$ resp.) to be the concatenation of the strings $\{ y_i\}_i\in T$ ($\{x_i\}_{i\in T}$ resp.).

To enforce the conditions above, we design a protocol. We will be primarily concerned with the amount of storage at the verifier side and the amount of communication and want to minimize both simultaneously while giving good verification properties. The following definition captures these notions. (We also allow for the provers to collude among each other.)

   \begin{definition}
Let $s,c,m\ge 1$ and $0\le r\le s$ be integers, $0\le \rho\le 1$ be a real and $f:[q]^*\rightarrow \R_{\ge 0}$ be a function. Then an $(s,r)$-{\em party verification protocol} with resource bound $(c,m)$ and verification guarantee $(\rho,f)$ is a randomized protocol with the following guarantee.
For any string $x\in [q]^k$, $U$ stores at most $m$ bits and communicates at most $c$ bits with the $s$ provers. At the end, the protocol either outputs a $1$ or a $0$. Finally, the following is true for any $T\subseteq [s]$ with $|T|\le r$: If the protocol outputs a $1$ with probability at least $\rho$, then assuming that every prover $i\in [s]\setminus T$ followed the protocol and that every prover in $T$ possibly colluded with one another, we have $|y_T|\ge f(x_T)$.
\end{definition}

We will denote a $(1,1)$-party verification protocol as a one-party verification protocol. (Note that in this case, the single prover is allowed to behave arbitrarily.)

All of our protocols will have the following structure: we first pick a hash family. The protocol will pick random hash(es) and store the corresponding hash values for $x$ (along with the indices of the hash functions) during the pre-processing step. During the verification step, $U$ sends the indices to the hashes as challenges to the $s$ provers. Throughout this paper, we will assume that each prover $i$ has a computable algorithm $\A_{x,i}$ such that on challenge $\beta$ it returns an answer $\A_{x,i}(\beta,y_i)$ to $U$. The protocol then outputs $1$ or $0$ by applying a (simple) boolean function on the answers and the stored hash values.

In particular, for the one-party verification protocol, we have the following simple result, which we prove in the appendix.

\begin{prop}
\label{prop:au-verification}
Let $\H=\{h_1,\dots,h_n\}$ be a hash family where $h_i:[q]^k\rightarrow [q]$ that is $(\gamma,f)$-\se. Then there exists a one party verification protocol with resource bound $(\log{q}+\log{n},\log{q}+\log{n})$ and performance guarantee $(\gamma,f)$.
\end{prop}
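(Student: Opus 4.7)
The proof should be essentially a direct unpacking of the definition of $(\gamma, f)$-storage enforcement, so the plan is simply to exhibit the natural protocol and verify that each of the three requirements (storage bound, communication bound, verification guarantee) follows immediately.

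The plan is to define the protocol as follows. In the pre-processing step, $U$ samples $i \in [n]$ uniformly at random and stores the pair $(i, h_i(x))$, which requires $\log n + \log q$ bits. The string $x$ is then handed over to the prover, who applies some computable function to produce $y \in [q]^*$ and a response algorithm $\A_x$. In the verification step, $U$ sends the challenge $i$ to the prover (using $\log n$ bits), receives back a value $z = \A_x(i, y) \in [q]$ (using $\log q$ bits), and outputs $1$ if and only if $z = h_i(x)$. The total communication is thus $\log n + \log q$ bits, matching the claimed resource bound.

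For the verification guarantee, suppose the protocol outputs $1$ with probability at least $\gamma$ over the random choice of $i \in [n]$. Since the output is $1$ exactly when $\A_x(i, y) = h_i(x)$, this means
\[
\Pr_{i \in [n]}\bigl[\A_x(i,y) = h_i(x)\bigr] \geq \gamma.
\]
This is precisely the hypothesis in the definition of $(\gamma, f)$-storage enforcement, so we may invoke the assumption that $\H$ is $(\gamma, f)$-\se\ to conclude that $|y| \geq f(x)$, as required.

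There is really no hard step here: the proposition is a translation between the storage enforcement definition (phrased in terms of a prover's ability to answer random challenges) and the verification protocol definition (phrased in terms of an interactive protocol). The only thing to be careful about is to observe that the protocol's acceptance event coincides exactly with the event $\A_x(i, y) = h_i(x)$ in the \se\ definition, so that the probabilistic hypothesis transfers cleanly. Once that observation is made, the resource bounds and the conclusion $|y| \geq f(x)$ follow with no calculation.
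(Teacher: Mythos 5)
Your proposal is correct and matches the paper's own proof: both exhibit the same natural protocol (store a random index and its hash value, send the index as a challenge, accept iff the reply equals the stored hash) and derive the guarantee by noting that acceptance coincides with the event $\A_x(i,y)=h_i(x)$, so the $(\gamma,f)$-\se\ definition applies directly. No gaps.
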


\subsection{Coding Basics}
\label{sec:coding}

We begin with some basic coding definitions. 
An error-correcting \textit{code} $H$ with \textit{dimension} $k\ge 1$ and \textit{block length} $n\ge k$ over an \textit{alphabet} of size $q$ is any function $H: [q]^k \rightarrow [q]^n$.
   A \textit{linear code} $H$ is any error-correcting code that is a linear function, in which case we correspond $[q]$ with $\F_q$.
   A \textit{ message} of a code $H$ is any element in the domain of $H$.
   A \textit{ codeword} in a code $H$ is any element in the range of $H$. 
The \textit{ Hamming distance} $\Delta(x,y)$ of two same-length strings is the number of symbols in which they differ. The \textit{ relative distance} $\delta$ of a code is $\min_{x\neq y}\frac{\Delta(x,y)}{n}$, where $x$ and $y$ are any two different codewords in the code.

The following connection between \au\ and codes is well known. The straightforward proof is in the appendix.

\begin{prop}[\cite{au-code}]
\label{prop:au-code}
Let $\H=\{h_1,\dots,h_n\}$ be an $\eps$-\au\ hash family with $h_i:[q]^k\rightarrow [q]$. Now consider the related code $q$-ary code $H$ with block length $n$ and dimension $k$ where $H(x)=(h_i(x))_{i\in [n]}$. Then $H$ has relative distance at least $1-\eps$.
\end{prop}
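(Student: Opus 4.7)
The plan is to translate the $\eps$-almost universal property, which is a statement about collision probability, directly into a statement about Hamming distance between codewords. The two notions are essentially dual: a collision at coordinate $i$ is precisely an agreement between $H(x)$ and $H(y)$ at position $i$, so bounding the collision probability is the same as bounding the fraction of agreements, which is the same as lower bounding the fraction of disagreements.

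Concretely, I would fix two arbitrary distinct messages $x \neq y$ in $[q]^k$ and look at the set $A(x,y) = \{ i \in [n] : h_i(x) = h_i(y)\}$. By the definition of $H$, the codewords $H(x)$ and $H(y)$ agree exactly at the coordinates in $A(x,y)$, so $\Delta(H(x), H(y)) = n - |A(x,y)|$. Next, I would observe that choosing $i$ uniformly at random from $[n]$ gives $\Pr_i[h_i(x) = h_i(y)] = |A(x,y)|/n$, and the $\eps$-almost universal hypothesis says this probability is at most $\eps$. Thus $|A(x,y)| \le \eps n$, so $\Delta(H(x),H(y))/n \ge 1 - \eps$.

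Since this lower bound holds uniformly over every pair $x \neq y$, taking the minimum gives the relative distance bound $\delta \ge 1 - \eps$, as desired. I do not foresee a main obstacle here; the only thing one has to be slightly careful about is that the code is well-defined, i.e. that $H$ is actually a function into $[q]^n$, which is immediate from the definition of the hash family, and that the minimum in the relative distance is taken only over distinct codewords, matching the $x \neq y$ condition in the almost universal definition.
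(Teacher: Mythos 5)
Your proposal is correct and is essentially the same argument as the paper's: you bound the number of agreement coordinates by $\eps n$ using the $\eps$-almost universal property and conclude $\Delta(H(x),H(y))\ge (1-\eps)n$ for every pair $x\neq y$. The only difference is that you spell out the probability-to-counting translation and the final minimization explicitly, which the paper leaves implicit.
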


   \begin{definition} A $(\rho,L)$ {\em list-decodable code} is any error-correcting code such that for every vector $e$ in the codomain of $H$, the number of codewords that are Hamming distance $\rho n$ or less from $e$ is always $L$ or fewer.
\end{definition}


\paragraph{Johnson Bound.}
We now state a general combinatorial result for list decoding codes with large distance, which will be useful. The result below allows for a sort of non-standard definition of codes, where a codeword is a vector in $\prod_{i=1}^n [q_i]$, where the $q_i$'s can be distinct.\footnote{We're overloading the product operator $\prod$ here to mean the iterated Cartesian product.} (So far we have looked only at the case where $q_i=q$ for $i\in [n]$.) The notion of Hamming distance still remains the same, i.e. the number of positions that two vectors differ in. (The syntactic definitions of the distance of a code and the list decodability of a code remain the same.) We will need the following result:

\begin{theorem}[\cite{G-thesis}]
\label{thm:johnson}
Let $C$ be a code with block length $n$ and distance $d$ where the $i$th symbol in a codeword comes from $[q_i]$. Then the code is $\left(1-\sqrt{1-\frac{d}{n}},2\sum_{i=1}^n q_i\right)$-list decodable.
\end{theorem}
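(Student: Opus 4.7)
The plan is to convert the combinatorial list-decoding question into a geometric one in $\R^D$ with $D=\sum_{i=1}^n q_i$, and then invoke the classical fact that $\R^D$ admits at most $2D$ nonzero vectors with pairwise non-positive inner products. I would first embed each word $x\in [q_1]\times\cdots\times [q_n]$ as $\tilde x\in \R^D$ by setting the coordinate indexed by $(i,a)$ to $1$ if $x_i=a$ and to $0$ otherwise. This embedding converts Hamming geometry into Euclidean geometry: $\|\tilde x\|^2=n$ and $\langle \tilde x,\tilde y\rangle=n-\Delta(x,y)$.

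\paragraph{Main computation.}
Next I would fix any received word $r$ and suppose $c_1,\dots,c_L$ are codewords of $C$ with $\Delta(c_i,r)\le e:=(1-\sqrt{1-d/n})n$. The embedding gives $\langle \tilde c_i,\tilde r\rangle\ge n-e$ and $\langle \tilde c_i,\tilde c_j\rangle\le n-d$ for $i\neq j$. Introduce the auxiliary vectors $u_i:=\tilde c_i-\alpha\tilde r$ with $\alpha:=(n-e)/n=\sqrt{1-d/n}$. A direct expansion bounds
\[\langle u_i,u_j\rangle\le (n-d)-2\alpha(n-e)+\alpha^2 n=(n-d)-\frac{(n-e)^2}{n},\]
and the defining identity $(n-e)^2=n(n-d)$ at the Johnson radius makes the right-hand side vanish, so $\langle u_i,u_j\rangle\le 0$ for all $i\neq j$. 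I would also verify $u_i\neq 0$: the entries of $\tilde c_i$ lie in $\{0,1\}$ while those of $\alpha\tilde r$ lie in $\{0,\alpha\}$ with $\alpha<1$ (when $d\ge 1$; the $d=0$ case is vacuous), ruling out $\tilde c_i=\alpha\tilde r$.

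\paragraph{Finishing the argument and main obstacle.}
Finally I would invoke the elementary lemma that any family of nonzero vectors in $\R^D$ with pairwise non-positive inner products has size at most $2D$. Applied to $\{u_1,\dots,u_L\}\subset \R^D$ this yields $L\le 2\sum_{i=1}^n q_i$, as claimed. The lemma itself is standard and can be proved by induction on $D$: project the remaining vectors orthogonal to $v_1$, observe the projected inner products remain $\le 0$ because the correction term $\langle v_i,v_1\rangle\langle v_j,v_1\rangle/\|v_1\|^2$ is non-negative, and argue that at most one projection vanishes (any two parallel vectors $\lambda_i v_1,\lambda_j v_1$ with $\lambda_i,\lambda_j<0$ would have a strictly positive inner product). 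The only step that requires real insight is identifying the correct coefficient $\alpha$: it is exactly the minimizer of the quadratic $(n-d)-2\alpha(n-e)+\alpha^2 n$ in $\alpha$, and the precise choice $e=(1-\sqrt{1-d/n})n$ is what forces that minimum value to be $0$. Once this is seen, every remaining step is mechanical, so I do not expect a genuine obstacle.
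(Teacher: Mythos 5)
Your proof is correct: the indicator embedding, the shift by $\alpha\tilde r$ with $\alpha=\sqrt{1-d/n}$ chosen to make the pairwise inner products non-positive at the Johnson radius, and the lemma that $\R^D$ contains at most $2D$ nonzero pairwise non-positively correlated vectors together give exactly $L\le 2\sum_{i=1}^n q_i$. The paper itself gives no proof of this theorem (it is quoted from Guruswami's thesis), and your argument is essentially the standard geometric proof from that cited source, so there is nothing to reconcile.
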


Theorem~\ref{thm:johnson} and Proposition~\ref{prop:au-code} implies the following:
\begin{corollary}
\label{cor:au-ld}
Let $\H=\{h_1,\dots,h_n\}$ be an $\eps$-\au\ hash family with $h_i:[q]^k\rightarrow [q]$. Now consider the related code $q$-ary code $H$ where $H(x)=(h_i(x))_{i\in [n]}$. Then $H$ is $(1-\sqrt{\eps},2qn)$-list decodable.
\end{corollary}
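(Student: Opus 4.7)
The plan is to chain together the two results that immediately precede the corollary. First, I would invoke Proposition~\ref{prop:au-code} on the $\eps$-\au\ family $\mathcal H$ to conclude that the associated code $H$ (whose $i$-th coordinate is $h_i(x)$) has relative distance at least $1-\eps$; equivalently, its distance satisfies $d \ge (1-\eps)n$.

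Next, I would apply Theorem~\ref{thm:johnson} (the Johnson bound) to $H$. Since every output coordinate takes values in $[q]$, we have $q_i = q$ for all $i \in [n]$, so the sum $\sum_{i=1}^n q_i$ appearing in the bound collapses to $nq$. The theorem then guarantees that $H$ is $\left(1-\sqrt{1-d/n},\,2nq\right)$-list decodable.

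Finally, I would translate the radius bound into the form stated in the corollary. From $d/n \ge 1-\eps$ we get $1-d/n \le \eps$, hence
\[
1-\sqrt{1-d/n} \;\ge\; 1-\sqrt{\eps}.
\]
Since the number of codewords inside a Hamming ball is monotone non-decreasing in the radius, $(\rho',L)$-list decodability for any $\rho' \ge 1-\sqrt{\eps}$ immediately implies $(1-\sqrt{\eps},L)$-list decodability with the same list size $L = 2nq$, which is exactly the claim of Corollary~\ref{cor:au-ld}.

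The proof is essentially a direct composition of two cited results, so there is no real obstacle. The only mildly non-mechanical step is the last one: the Johnson bound as stated gives list decodability at the specific radius $1-\sqrt{1-d/n}$, and one must observe the (obvious) monotonicity of list decodability in the radius to transfer the guarantee down to the possibly smaller radius $1-\sqrt{\eps}$ named in the corollary. Everything else is just substituting $q_i = q$ and $d \ge (1-\eps)n$ into the Johnson-bound expression.
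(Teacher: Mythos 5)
Your proof is correct and follows exactly the route the paper intends: the paper derives Corollary~\ref{cor:au-ld} by combining Proposition~\ref{prop:au-code} (distance at least $(1-\eps)n$) with Theorem~\ref{thm:johnson} specialized to $q_i=q$, just as you do. Your explicit note that list decodability is monotone in the radius (so the guarantee at radius $1-\sqrt{1-d/n}\ge 1-\sqrt{\eps}$ transfers down) is the only step the paper leaves implicit, and it is handled correctly.
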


   For the purposes of this paper, we only consider codes $H$ that are members of a family of codes $\bar{\mathcal{H}}$, any one of which can be indexed by $(k,n,q,\rho n)$. 

   \paragraph{Plain Kolmogorov Complexity.}

   \begin{definition}
   The {\em plain Kolmogorov Complexity} $C(x)$ of a string $x$ is the minimum sum of sizes of a compressed representation of $x$, along with its decompression algorithm $D$, and a reference universal Turing machine that runs the decompression algorithm.
   \end{definition}

   Because the reference universal Turing machine size is constant, it is useful to think of $C(x)$ as simply measuring the amount of inherent (i.e. incompressible) information in a string $x$.


\section{Single Prover Storage Enforcement}
\label{one-party-result}

We begin by presenting our main result for the case of one prover ($s=1$) to illustrate the combination of list decoding and Kolmogorov complexity. In the subsequent section, we will generalize our result to the multiple prover case (for the data possession problem).

\begin{theorem}
\label{thm:ld-se}
For every computable error-correcting code $H: [q]^k \rightarrow [q]^n$ that is $(\rho,L)$ list-decodable, define the hash family $\H=\{h_1,\dots,h_n\}$ as $h_i(x)=H(x)_i$. Then $\H$ is a $(1-\rho,f)$-\se, where  for any $x\in [q]^k$, $f(x)=C(x) - \log (qLn^3) - 2\log\log (qn) -c_0$, for some fixed constant $c_0$.\footnote{The contribution from the encoding of the constants in this theorem to $c_0$ is 2. For most codes, we can take $c_0$ to be less than a few thousand. What is important is that the contribution from the encoding is independent of the rest of the constants in the theorem. Although we do not explicitly say so in the body of the theorem statements, this important fact is true for the rest of the results in this paper as well.}
\end{theorem}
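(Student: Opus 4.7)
The plan is to prove the contrapositive via a Kolmogorov-complexity compression argument, following the blueprint from the introduction. Fix $x \in [q]^k$ and suppose the prover stores $y \in [q]^*$ and runs an algorithm $\A_x$ satisfying $\Pr_{\beta \in [n]}[\A_x(\beta, y) = h_\beta(x)] \ge 1 - \rho$. I would define the simulated received word $r \in [q]^n$ by $r_\beta := \A_x(\beta, y)$, so that the success-probability hypothesis translates directly into the Hamming-distance bound $\Delta(r, H(x)) \le \rho n$. Applying $(\rho,L)$-list decodability of $H$ then produces a list $\{x_1, \ldots, x_{L'}\} \subseteq [q]^k$ of size $L' \le L$ containing every message whose codeword is within distance $\rho n$ of $r$; under any canonical (say lexicographic) ordering of this list, $x$ itself occupies some index $j^* \in [L]$.

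The next step is to exhibit a short description of $x$ via a fixed universal-style Turing machine $M$. Given a self-delimiting encoding of the parameters $q, k, n, \lfloor \rho n \rfloor, L, j^*$ together with $y$, the machine $M$ simulates $\A_x(\beta, y)$ for every $\beta \in [n]$ to reconstruct $r$, then brute-force enumerates all $x' \in [q]^k$ (using computability of $H$) to collect those with $\Delta(H(x'), r) \le \rho n$, and finally outputs the $j^*$-th such message in canonical order. Treating the source codes of $M$, $\A_x$, $H$, and the reference universal Turing machine as a constant absorbed into $c_0$ (this is precisely the convention the theorem's footnote invokes), the total description length is at most $|y| + \log q + \log L + 3\log n + 2\log\log(qn) + c_0$, where the $3\log n$ bounds $\log k + \log \lfloor \rho n \rfloor + \log n$ using $k,\rho n \le n$, and the $2\log\log(qn)$ absorbs the prefix-free overhead of the integer parameters (via, e.g., Elias $\delta$-codes routed through a single length header).

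By the definition of plain Kolmogorov complexity this description length is at least $C(x)$, so rearranging gives $|y| \ge C(x) - \log(qLn^3) - 2\log\log(qn) - c_0 = f(x)$, which is exactly the $(1-\rho, f)$-\se\ guarantee. The main technical nuisance I anticipate is the prefix-free encoding accounting for the bundle of integer parameters: naively summing per-parameter Elias overheads would yield several doubly-logarithmic terms rather than the clean single $2\log\log(qn)$ claimed, so one must bundle the parameters through a common length header (or an analogous batched self-delimiting scheme) to keep the overhead down to one doubly-logarithmic term. Once that bookkeeping is pinned down, the rest of the argument is forced by the list-decoding hypothesis and the definition of $C(\cdot)$.
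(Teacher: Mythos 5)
Your proposal is correct and follows essentially the same route as the paper's proof: turn the $1-\rho$ success probability into $\Delta\bigl((\A_x(\beta,y))_{\beta\in[n]},H(x)\bigr)\le\rho n$, invoke combinatorial $(\rho,L)$-list decodability, spend $\log L$ bits of advice to index $x$ in the list, and charge the self-delimiting parameter encoding plus the descriptions of $\A_x$ and the encoder into the additive loss, contradicting (equivalently, lower-bounding against) $C(x)$. Your extra care in bundling the integer parameters so the prefix-free overhead yields a single $2\log\log(qn)$ term is consistent with, and if anything slightly tighter than, the paper's own accounting.
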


\begin{proof} We assume that the prover, upon receiving $x$, saves a string $y\in [q]^*$. The prover is allowed to use any computable function to obtain $y$ from $x$.
Further, we assume that the prover, upon receiving the (random) challenge $\beta\in [n]$, uses a computable function $\A_x:[n]\times [q]^*\rightarrow [q]$ to compute $a=\A_x(\beta,y)$.

We prove the \se\ property by contradiction. Assume that $|y|<f(x)\stackrel{def}{=} C(x) - \log (qLn^3) - 2\log\log (qn) -c_0$ and yet $\A_x(\beta,y)=h_{\beta}(x)=H(x)_{\beta}$ with probability at least $1-\rho$ (over the choice of $\beta$). Define $z=(\A_x(\beta,y))_{\beta\in [n]}$. Note that by the claim on the probability,  $\Delta(z,H(x))\le \rho n$. We will use this and the list decodability of the code $H$ to prove that there is an algorithm with description size $< C(x)$ to describe $x$, which is a contradiction. To see this, consider the following algorithm that uses $y$ and an advice string $v\in\{0,1\}^{|{L}|}$:
\begin{enumerate}
\item Compute a description of $H$ from $n,k,\rho n$ and $q$.
\item Compute $z=(\A_x(\beta,y))_{\beta\in [n]}$.
\item By cycling through all $x\in [q]^k$, retain the set $\mathcal L\subseteq [q]^k$ such that for every $u\in \mathcal L$, $\Delta(H(u),z)\le \rho n$.
\item Output the $v$th string from $\mathcal L$.
\end{enumerate}

Note that since $H$ is $(\rho,L)$-list decodable, there exists an advice string $v$ such that the algorithm above outputs $x$. Further, since $H$ is computable, there is an algorithm $\mathcal E$ that can compute a description of $H$ from $n,k,\rho n$ and $q$. (Note that using this description, we can generate any codeword $H(u)$ in step 3.) Thus, we have description of $x$ of size $|y|+|v|+|\A_x|+|\mathcal E|+ (3\log{n}+\log{q}+2\log\log n+2\log\log q+2)$ (where the last term is for encoding the different parameters\footnote{We use a simple self-delimiting encoding of $q$ and $n$, followed immediately by $k$ and $\rho n$ in binary, with the remaining bits used for $v$. A simple self-delimiting encoding for a positive integer $u$ is the concatenation of: $( \lceil\log(|u|)\rceil$ in unary, $0$, $|u|$ in binary, $u$ in binary). We omit the description of this encoding in later proofs.}),
which means if $|y| < C(x)-|v|-|\A_x|-|\mathcal E|-(3\log{n}+\log{q}+2\log\log n+2\log\log q+2) = f(x)$, then we have a description of $x$ of size $<C(x)$, which is a contradiction.

\end{proof}

The one unsatisfactory aspect of the result above is that if $H$ is not polynomial time computable, then Step 2 in the pre-processing step for $U$ is not efficient. Similarly, if the sever is not cheating (and e.g. stores $y=x$), then it cannot also compute the correct answer efficiently. We will come back to these issues when we instantiate $H$ by an explicit code such as Reed-Solomon.

\begin{remark}
\label{rem:por}
If $H$ has relative distance $\delta$, then note that $\H$ is $(1-\delta/2-\epsilon,C(x)-\log(qn^3)-2\log\log(qn)-c_0)$-\se\ for some fixed constant $c_0$. Further, $y$ has enough information for the prover to compute $x$ back from it. (It can use the same algorithm to compute $x$ from $y$ detailed above, except it does not need the advice string $v$, as in Step 3 we will have $\mathcal L=\{x\}$.) For the more general case when $H$ is $(\rho,L)$-list decodable and $\H$ is $(1-\rho,C(x)-\log{qLn^3}-2\log\log{qn}-c_0)$-\se, then $y$ has enough information for the prover to compute a list $\mathcal{L}\supseteq \{x\}$ with $|\mathcal L|\le L$. The verifier, if given access to $\mathcal L$, can use its randomly stored $h_{\beta}(x)$ to pick $x$ out of $\mathcal L$ with probability at least $1-\delta L$.
\end{remark}

Theorem~\ref{thm:ld-se} along with Proposition~\ref{prop:au-verification} implies the following result:
\begin{theorem}
\label{thm:main-single}
For every computable error-correcting code $H: [q]^k \rightarrow [q]^n$ that is $(\rho,L)$ list-decodable, there exists a one-party verification protocol with resource bound $(\log{n} + \log{q},\log{n} + \log{q})$ and verification guarantee $(1-\rho,f)$, where  for any $x\in [q]^k$, $f(x)=C(x) - \log (qLn^3) - 2\log\log (qn) -c_0$, for some fixed constant $c_0$.
\end{theorem}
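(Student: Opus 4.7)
The plan is to derive Theorem~\ref{thm:main-single} as an essentially immediate corollary by chaining the two results already established in the excerpt: Theorem~\ref{thm:ld-se}, which converts list-decodability of a code into the storage-enforcement property of the naturally associated hash family, and Proposition~\ref{prop:au-verification}, which converts a storage-enforcing hash family into a one-party verification protocol with the claimed resource bounds.

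First, I would invoke Theorem~\ref{thm:ld-se} on the given computable $(\rho,L)$-list-decodable code $H:[q]^k\to[q]^n$. This hands us the hash family $\mathcal{H}=\{h_1,\dots,h_n\}$ defined by $h_i(x)=H(x)_i$, together with the guarantee that $\mathcal{H}$ is $(1-\rho,f)$-\se, where $f(x)=C(x)-\log(qLn^3)-2\log\log(qn)-c_0$. Since each $h_i$ is a map $[q]^k\to[q]$, $\mathcal{H}$ satisfies exactly the hypothesis of Proposition~\ref{prop:au-verification} with $\gamma=1-\rho$ and the same $f$.

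Second, I would feed this hash family into Proposition~\ref{prop:au-verification}. The proposition then yields a one-party verification protocol whose resource bound is $(\log q+\log n,\log q+\log n)$ bits for communication and local storage respectively, and whose verification guarantee is exactly $(1-\rho,f)$, which matches the statement of Theorem~\ref{thm:main-single}.

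There is really no obstacle here: the only thing to verify is that the parameters match up cleanly. In particular, the $(\gamma,f)$ produced by Theorem~\ref{thm:ld-se} plugs in directly as the $(\gamma,f)$ required by Proposition~\ref{prop:au-verification}, and the stated resource bound $(\log n+\log q,\log n+\log q)$ is exactly what Proposition~\ref{prop:au-verification} delivers (the verifier stores a single index in $[n]$ together with one hash symbol in $[q]$, and the entire communication consists of transmitting the index and receiving the corresponding hash symbol). Hence the proof is a two-line composition, and I would simply state it as such.
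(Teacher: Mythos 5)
Your proposal is correct and is exactly the paper's own argument: the paper obtains Theorem~\ref{thm:main-single} by combining Theorem~\ref{thm:ld-se} with Proposition~\ref{prop:au-verification}, precisely the two-step composition you describe, with the same parameters $(\gamma,f)=(1-\rho,f)$ and the same resource bound.
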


\section{Multiple Prover Storage Enforcement}

In this section, we show how to extend the result from Theorem~\ref{thm:main-single} to the multiple server case. In the first two sub-sections, we will implicitly assume the following: (i) We are primarily interested in whether some server was cheating and not in identifying the cheater(s) and (ii) We assume that all servers always reply back (possibly with an incorrect answer).
   
\paragraph{Trivial Solution.}
We begin with the following direct generalization of Theorem~\ref{thm:main-single} to the multiple server case: essentially run $s$ independent copies of the protocol from Theorem~\ref{thm:main-single}. 

\begin{theorem}
\label{thm:main-multiple-trivial}
For every computable error-correcting code $H: [q]^{k/s} \rightarrow [q]^n$ that is $(\rho,L)$ list-decodable, there exists an $(s,s)$-party verification protocol with resource bound $(\log{n} + s\log{q},s(\log{n} + \log{q}))$ and verification guarantee $(1-\rho,f)$, where  for any $x\in [q]^k$, $f(x)=C(x) -s -\log (s^2qL^sn^4) -2\log\log(qn) -c_0$, for some fixed positive integer $c_0$.
\end{theorem}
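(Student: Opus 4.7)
My plan is to run, in parallel, $s$ independent instances of the single-prover protocol of Theorem~\ref{thm:main-single} on the chunks $x_1,\dots,x_s\in[q]^{k/s}$. At preprocessing $U$ picks $\beta_i\in[n]$ uniformly and independently for each $i\in[s]$, stores the $s$ pairs $(\beta_i,h_{\beta_i}(x_i))$, and forwards $x_i$ to $P_i$. At audit time $U$ sends each $\beta_i$ to $P_i$, receives a response $a_i$, and accepts iff $a_i=h_{\beta_i}(x_i)$ for every $i$. The resource bounds then follow directly from the protocol specification---the stated $\log n+s\log q$ communication being achievable by broadcasting a single shared $\beta$, but it is the independent-$\beta_i$ version that will drive the analysis.

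Fix any $T\subseteq[s]$ and suppose $U$ outputs $1$ with probability at least $1-\rho$. Since every honest $P_i$ (with $i\notin T$) always returns the correct answer, this event coincides with the event that all colluding provers answer correctly. In the model above, each prover's response is a function $g_i(\beta):=\A_{x,i}(\beta,y_i)$ of \emph{only} its own challenge and its own state; collusion is captured entirely by the joint optimization of the $y_i$'s at preprocessing time. Because the $\beta_i$'s are independent, the joint success probability factors:
\[\prod_{i\in T}\Pr_{\beta_i}\bigl[g_i(\beta_i)=h_{\beta_i}(x_i)\bigr]\;\ge\;1-\rho,\]
and, since each factor is at most $1$, every marginal factor is itself at least $1-\rho$.

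I would then apply the list-decoding reduction of Theorem~\ref{thm:ld-se} independently to each chunk: for every $i\in T$, the word $(g_i(\beta))_{\beta\in[n]}$ is within Hamming distance $\rho n$ of $H(x_i)$, so by $(\rho,L)$-list-decodability of $H$ there is a list $\mathcal L_i$ of at most $L$ candidates containing $x_i$. Spending $\log L$ advice bits per chunk to specify $x_i$ inside $\mathcal L_i$, concatenating $y_T$ with $O(s)$ bits of chunk delimiters, and prepending self-delimiting encodings of $n,k/s,q,\rho n$ and $s$ along with the constant-size descriptions of the encoder $\mathcal E$ of $H$ and of $\A_{x,\cdot}$, I obtain a Kolmogorov description of $x_T$ of length at most
\[|y_T|+|T|\log L+O(\log n+\log q+\log s+\log\log(qn))+O(s)+c_0'.\]
Matching this overhead against $f(x_T)=C(x_T)-s-\log(s^2qL^sn^4)-2\log\log(qn)-c_0$ and using $|T|\le s$ produces a description of $x_T$ strictly shorter than $C(x_T)$ whenever $|y_T|<f(x_T)$, the desired contradiction. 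The main obstacle is therefore not conceptual but is the careful accounting of additive overhead: the $s\log L$ selector bits, the $O(\log n+\log q)$ self-delimited parameter encoding, the $2\log\log(qn)$ slack, the $O(s)$ delimiters inside $y_T$, and the constant $c_0$ absorbing $\mathcal E$ and the $\A_{x,i}$'s must all fit inside the specified budget, in the same spirit as the accounting in Theorem~\ref{thm:ld-se}.
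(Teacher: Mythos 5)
Your core argument---per-chunk list decoding, $|T|\log L\le s\log L$ bits of advice to select $x_j$ from each list $\mathcal L_j$, self-delimiting parameter encodings, and the contradiction with $C(x_T)$---is exactly the paper's argument, and your overhead accounting is consistent with the $\log(L^s)$ term in $f$. The gap is in the protocol itself. The paper's protocol draws a \emph{single} challenge $\beta\in[n]$, stores $(\beta,H(x_1)_\beta,\dots,H(x_s)_\beta)$, and sends the same $\beta$ to all $s$ provers; that is what yields the $\log n+s\log q$ component of the resource bound (one index plus $s$ hash values). Your protocol draws independent $\beta_i$ per prover, so the verifier must keep $s$ index/hash pairs and both resource components become $s(\log n+\log q)$, which does not match the stated pair $(\log n+s\log q,\,s(\log n+\log q))$. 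You try to reconcile this by saying the stated bound ``is achievable by broadcasting a single shared $\beta$'' while letting ``the independent-$\beta_i$ version drive the analysis,'' but you cannot claim the resource bound of one protocol and the verification guarantee of another: as written, neither protocol is shown to satisfy both halves of the theorem.

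The repair is immediate, and it shows the independence you invoke is unnecessary. With a shared $\beta$, the verifier outputs $1$ only if \emph{every} prover answers correctly, so the acceptance event is contained in the event $\{\A_{x,j}(\beta,y_j)=H(x_j)_\beta\}$ for each $j\in T$. Hence acceptance probability at least $1-\rho$ already gives $\Pr_{\beta}[\A_{x,j}(\beta,y_j)=H(x_j)_\beta]\ge 1-\rho$, i.e.\ $\Delta\bigl((\A_{x,j}(\beta,y_j))_{\beta\in[n]},H(x_j)\bigr)\le\rho n$ for every $j\in T$, with no factorization of probabilities: monotonicity of the joint event inside each marginal event replaces independence, and correlation of the challenges cannot hurt. (Your factorization step is correct for the independent-challenge variant, but it is only needed because you introduced that variant.) Substituting this observation and running your per-chunk decoding step on each $z_j=(\A_{x,j}(\beta,y_j))_{\beta\in[n]}$ recovers the paper's proof verbatim, with the claimed resource bound intact.
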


\paragraph{Multiple Parties, One Hash.}
One somewhat unsatisfactory aspect of Theorem~\ref{thm:main-multiple-trivial} is that the storage needed by $U$ goes up a factor of $s$ from that in Theorem~\ref{thm:main-single}. Next we show that if the code $H$ is linear (and list decodable) then we can get a similar guarantee as that of Theorem~\ref{thm:main-multiple-trivial} except that the storage usage of $U$ remains the same as that in Theorem~\ref{thm:main-single}.

%

\begin{theorem}
\label{thm:main-multiple-linear}
For every computable linear error-correcting code $H: \F_q^k \rightarrow \F_q^n$ that is $(\rho,L)$ list-decodable, there exists an $(s,s)$-party verification protocol with resource bound $(\log{n} + \log{q},s(\log{n} + \log{q}))$ and verification guarantee $(1-\rho,f)$, where  for any $x\in \F_q^k$, $f(x)=C(x) -s -\log (s^2qLn^4) -2\log\log(qn) -c_0$, for some fixed positive integer $c_0$.
\end{theorem}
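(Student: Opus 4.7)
The plan is to exploit linearity of $H$ to collapse the $s$ independent hash values stored in the pre-processing step of Theorem~\ref{thm:main-multiple-trivial} into a single hash value, while preserving the list-decoding argument. Write $x=(x_1,\dots,x_s)\in \F_q^k$ with $x_i\in \F_q^{k/s}$ and, using linearity, express $H(x)=\sum_{i=1}^s H_i(x_i)$, where $H_i:\F_q^{k/s}\to\F_q^n$ is the restriction of $H$ to messages supported on the $i$th block of coordinates. In the pre-processing step, $U$ samples $\beta\in [n]$ uniformly, stores the single pair $(\beta,\gamma=H(x)_\beta)$ (using $\log n+\log q$ bits), and sends $x_i$ to prover $i$. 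During verification, $U$ sends $\beta$ to every prover, receives $a_i\in [q]$ from each, and accepts iff $\sum_{i=1}^s a_i=\gamma$. The resource bound $(\log n+\log q,\,s(\log n+\log q))$ is then immediate.

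For the verification guarantee, fix any colluding set $T\subseteq [s]$. Since the provers in $[s]\setminus T$ return the correct values $H_i(x_i)_\beta$, the test reduces to the event $\sum_{i\in T}\A_{x,i}(\beta,y_i)=\sum_{i\in T}H_i(x_i)_\beta=H^{(T)}(x_T)_\beta$, where $H^{(T)}(x_T)\stackrel{def}{=}\sum_{i\in T}H_i(x_i)$. Hence, if the protocol accepts with probability at least $1-\rho$, then the vector $z\in\F_q^n$ given by $z_\beta=\sum_{i\in T}\A_{x,i}(\beta,y_i)$, which depends only on $y_T$ and $\{\A_{x,i}\}_{i\in T}$, satisfies $\Delta(z,H^{(T)}(x_T))\le \rho n$.

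The key observation that replaces the $L^s$ of Theorem~\ref{thm:main-multiple-trivial} by a single $L$ is that $H^{(T)}$, viewed as a code $\F_q^{|T|k/s}\to\F_q^n$, is itself $(\rho,L)$-list decodable: each of its codewords is also a codeword of $H$ (obtained by zero-padding the message outside the blocks in $T$), so any Hamming ball of radius $\rho n$ around $z$ contains at most $L$ of them. Therefore list decoding $z$ against $H^{(T)}$ yields a set $\mathcal L\subseteq\F_q^{|T|k/s}$ with $x_T\in\mathcal L$ and $|\mathcal L|\le L$.

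With these pieces in hand, the Kolmogorov complexity contradiction follows as in Theorem~\ref{thm:ld-se}. Given $y_T$, the algorithms $\{\A_{x,i}\}_{i\in T}$, a description of $H$ computable from $n,k,\rho n,q,s$, an encoding of $T$ (costing $s$ bits), and $\log L$ bits of advice selecting $x_T$ from $\mathcal L$, one can reconstruct $x_T$; tallying these sizes and comparing to $C(x_T)$ yields the stated $f(x_T)$ lower bound on $|y_T|$. The main subtlety, rather than obstacle, is book-keeping the parameter-encoding overhead so that the additive loss matches the claim exactly; the substantive new idea is the use of linearity to combine the $s$ hashes into one together with the observation that the combined code $H^{(T)}$ inherits list decodability with the same $L$.
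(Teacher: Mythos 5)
Your proposal is correct and follows essentially the same route as the paper's proof: the identical protocol (store a single $(\beta,H(x)_\beta)$, accept iff $\sum_i a_i=\gamma$), the same use of linearity to reduce acceptance to agreement of $\sum_{i\in T}\A_{x,i}(\beta,y_i)$ with $H(\hat{x}_T)_\beta$, and the same Kolmogorov-complexity contradiction with $\log L$ bits of advice plus an encoding of $T$. The only cosmetic difference is that you list decode the restricted code $H^{(T)}$ (whose codewords are zero-padded codewords of $H$, hence list size still $L$), whereas the paper list decodes the full code $H$ and recovers $\hat{x}_T$ from the list.
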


\paragraph{Catching the Cheaters and Handling Unresponsive Servers.}
%
%
%
The protocol in Theorem~\ref{thm:main-multiple-trivial} checks if the answer from each server is the same as corresponding stored hash. This implies that the protocol can easily handle the case when some server does not reply back at all. Additionally, if the protocol outputs a $0$ then it knows that at least one of the servers in the colluding set is cheating. (It does not necessarily identify the exact set $T$.\footnote{We assume that identifying at least one server in the colluding set is motivation enough for servers not to collude.})

However, the protocol in Theorem~\ref{thm:main-multiple-linear} cannot identify the cheater(s) and needs all the servers to always reply back. Next, using Reed-Solomon codes, at the cost of higher user storage and a stricter bound on the number of colluding servers, we show how to get rid of these shortcomings.

Recall that a Reed-Solomon code $RS:\F_q^m\rightarrow\F_q^{\l}$ can be represented as a systematic code (i.e. the first $k$ symbols in any codeword is exactly the corresponding message) and can correct $r$ errors and $e$ erasures as long as $2r+e\le \l-m$. Further, one can correct from $r$ errors and $e$ erasures in $O(\l^3)$ time. The main idea in the following result is to follow the protocol of Theorem~\ref{thm:main-multiple-trivial} but instead of storing all the $s$ hashes, $U$ only stores the parity symbols in the corresponding Reed-Solomon codeword.

\begin{theorem}
\label{thm:multiple-rs}
For every computable linear error-correcting code $H: \F_q^k \rightarrow \F_q^n$ that is $(\rho,L)$ list-decodable, assuming at most $e$ servers will not reply back to a challenge, there exists an $(r,s)$-party verification protocol with resource bound $(\log{n} +(2r+e)\cdot \log{q}),s(\log{n} + \log{q}))$ and verification guarantee $(1- \rho,f)$, where  for any $x\in \F_q^k$, $f(x)=C(x) -s -\log (s^2qLn^4) -2\log\log(qn) -c_0$, for some fixed positive integer $c_0$.

\end{theorem}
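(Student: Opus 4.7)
The plan is to combine the linearity-based ``sum trick'' of Theorem~\ref{thm:main-multiple-linear} with a Reed--Solomon outer code that lets $U$ cheaply detect and pinpoint the cheaters while tolerating erasures. Concretely, in pre-processing $U$ picks a random $\beta\in[n]$ and computes $h_i:=H(\hat x_i)_\beta$ for each $i\in[s]$, where $\hat x_i\in\F_q^k$ is $x_i$ zero-padded on coordinates belonging to other provers. Treating $(h_1,\dots,h_s)$ as an $\F_q^s$ message, $U$ systematically encodes it by a Reed--Solomon code $\mathrm{RS}:\F_q^s\to\F_q^{s+2r+e}$ and stores \emph{only} $\beta$ together with the $2r+e$ parity symbols $(p_1,\dots,p_{2r+e})$, which matches the stated $\log n+(2r+e)\log q$ storage bound.

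During verification, $U$ broadcasts $\beta$, collects replies $a_i$ (or an erasure) from each prover, and runs the standard $O(\ell^3)$-time RS error-and-erasure decoder on $(a_1,\dots,a_s,p_1,\dots,p_{2r+e})$, with missing $a_i$'s marked as erasures. Because this RS code has distance $2r+e+1$, any configuration of at most $r$ errors and $e$ erasures is corrected uniquely, so the decoder returns the true $(h_1,\dots,h_s)$. $U$ then outputs $1$ iff $a_i=h_i$ for every responsive $i$, and otherwise outputs $0$ together with the identities $\{i:a_i\ne h_i\}$ of the cheaters.

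For the storage enforcement claim, fix a colluder set $T\subseteq[s]$ with $|T|\le r$, and suppose the protocol outputs $1$ with probability at least $1-\rho$ over $\beta$. On each such ``good'' $\beta$ the decoder is correct, so $\A_{x,i}(\beta,y_i)=H(\hat x_i)_\beta$ for every $i\in T$; summing over $T$ and using linearity of $H$, the vector
\[
z_T \;:=\; \Bigl(\sum_{i\in T}\A_{x,i}(\beta,y_i)\Bigr)_{\beta\in[n]}
\]
satisfies $\Delta(z_T,H(\hat x_T))\le\rho n$. A single invocation of $(\rho,L)$-list decodability of $H$ then yields a list $\mathcal L\ni \hat x_T$ of size $\le L$, from which $\hat x_T$ (and hence $x_T$, since the non-$T$ coordinates of $\hat x_T$ are zero) is pinned down by $\lceil\log L\rceil$ bits of advice. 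Packaging $(y_T,\text{advice},\{\A_{x,i}\}_{i\in T},\mathcal E,T,s,n,k,q,\rho n)$ as a self-delimiting program for $x_T$ and comparing to $C(x_T)$ yields the claimed bound, in a form identical to that of Theorem~\ref{thm:main-multiple-linear}.

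The main technical point to get right is the interaction of the two codes: the RS layer has to be strong enough (distance at least $2r+e+1$) that the \emph{honest} analysis can \emph{uniquely} reconstruct all $s$ hashes and thereby catch the cheaters, while the \emph{enforcement} analysis must still collapse to a single list-decoding call on $H$ via the sum trick, since a naive per-prover list-decoding argument would cost $|T|\log L$ bits of advice and inflate $L$ to $L^{|T|}$ inside the additive loss, contradicting the stated $\log(s^2qLn^4)$ term.
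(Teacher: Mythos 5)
Your proposal is correct and follows essentially the same route as the paper's own proof: a systematic Reed--Solomon code over $\F_q^s$ whose $2r+e$ parity symbols are stored by $U$, error-and-erasure decoding to recover the true hash vector and localize cheaters, and then a single list-decoding call on $H$ applied to the summed responses of the colluding set (via linearity), followed by the same Kolmogorov-complexity description argument as in Theorem~\ref{thm:main-multiple-linear}. Your closing observation about avoiding the $|T|\log L$ advice blow-up by the sum trick is exactly why the paper's additive loss contains $\log L$ rather than $s\log L$.
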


\section{Corollaries}

We now present specific instantiations of list decodable codes $H$ to obtain corollaries of our main results.

   \paragraph{Optimal Storage Enforcement.}
We begin with the following observation: If the reply from a prover comes from a domain of size $q$, then one cannot hope to have a $(\delta,f)$-\se\ hash family for any $\delta\le 1/q$ for any non-trivial $f$. This is because the prover can always return a random value and guess any $h_i(x)$ with probability $1/q$.

Next, we show that we can get $\delta$ to be arbitrarily close to $1/q$ while still obtaining $f(x)$ to be very close to $C(x)$. 
We start off with the following result due to Zyablov and Pinsker:
\begin{theorem}[\cite{ZP}]
Let $q\ge 2$ and let $0<\rho<1-1/q$. There exists a $(\rho,L)$-list decodable code with rate $1-H_q(\rho)-1/L$.
\end{theorem}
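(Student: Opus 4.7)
The plan is to use the standard probabilistic (random coding) argument. I would pick a code $C\subseteq [q]^n$ of size $M=q^{Rn}$ for $R=1-H_q(\rho)-1/L$, by drawing the $M$ codewords independently and uniformly at random from $[q]^n$; one can deal with repeated codewords either by drawing without replacement, or by noting that repetitions occur with negligible probability and conditioning on distinctness.

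The failure event for the random code is that there exist some center $y\in [q]^n$ and an $(L{+}1)$-subset of codewords all lying in the Hamming ball $B(y,\rho n)$. First I would invoke the standard volume estimate $|B(y,\rho n)|\le q^{H_q(\rho) n}$, which holds precisely because $\rho<1-1/q$; this is the one place the hypothesis is used. It implies that any fixed random codeword lies in any fixed ball with probability at most $q^{-(1-H_q(\rho))n}$. Then by independence of the $M$ codeword choices and a union bound over the $q^n$ possible centers $y$ and the $\binom{M}{L+1}\le M^{L+1}$ choices of $(L{+}1)$-subset, the failure probability is at most
\[
q^n\cdot M^{L+1}\cdot q^{-(1-H_q(\rho))n(L+1)} \;=\; q^{\,n+Rn(L+1)-(1-H_q(\rho))n(L+1)}.
\]
Plugging in $R=1-H_q(\rho)-1/L$ simplifies the exponent to $n-n(L{+}1)/L=-n/L$, so the failure probability is at most $q^{-n/L}<1$ for every $n\ge 1$ and $q\ge 2$. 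Hence a random code of rate $R$ is $(\rho,L)$-list decodable with positive probability, so such a code exists.

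The only real subtlety is matching the $1/L$ slack in the rate against the overhead from the $q^n$ union bound over centers together with the $M^{L+1}$ combinatorial factor; the arithmetic works out exactly because the $-1/L$ deficit in $R$, multiplied by $L+1$, contributes $-n-n/L$, which cancels the leading $q^n$ and leaves a $q^{-n/L}$ slack. Everything else is the standard Hamming ball volume bound, a union bound, and independence of the random choices.
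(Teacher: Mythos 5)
Your proof is correct: the paper itself gives no argument for this statement (it is quoted from the Zyablov--Pinsker reference \cite{ZP}), and your random-coding argument --- Hamming ball volume bound $q^{H_q(\rho)n}$ (valid precisely because $\rho<1-1/q$), independence, and a union bound over the $q^n$ centers and $\binom{M}{L+1}$ subsets, with the exponent collapsing to $-n/L$ --- is exactly the classical proof of that theorem. The only cosmetic caveats are the integrality of $M=q^{Rn}$ and the fact that the union bound over $(L+1)$-subsets of \emph{message indices} already handles possible codeword collisions, so no conditioning on distinctness is really needed.
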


It is known that for $\eps<1/q$, $H_q(1-1/q-\eps)\le 1- C_q\eps^2$, where $C_q=q/(4\ln{q})$~\cite[Chap. 2]{a-thesis}. This implies that there exists a code $H:\F_q^k\rightarrow \F_q^n$, with $n\le \frac{k}{(C_q-1)\eps^2}\le 8k\ln{q}/(q\eps^2)$, which is $(1-1/q-\eps,1/\eps^2)$-list decodable. 
Note that the above implies that one can deterministically compute a uniquely-determined such code by iterating over all possible codes with dimension $k$ and block length $n$ and outputting the lexicographically least such one that is $(1-1/q-\eps,L)$-list decodable with the smallest discovered value of $L$.
Applying this to Theorem~\ref{thm:ld-se} implies the following optimal result:

\begin{corollary}
For every $\eps<1/q$ and integer $s\ge 1$,
there exists a  $(1/q+\eps,f)$-\se\ hash family, where for any $x\in[q]^k$, $f(x)=C(x) - s -\log{s^2k^4} +\log\eps^{2s+8} -\log\log{q^6k^2} +\log\log\eps^4 -\log\log\log{q^3} -c_0$ for some fixed positive integer $c_0$.
\end{corollary}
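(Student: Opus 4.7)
The plan is to combine the Zyablov--Pinsker existence bound for list-decodable codes with the (multi-prover variant of the) list-decoding-to-storage-enforcement reduction of Theorem~\ref{thm:main-multiple-trivial}, viewing the resulting $(s,s)$-party verification protocol as a \se\ hash family on $[q]^k$ (the ``single hash value'' being the tuple of $s$ server responses).

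First I would invoke Zyablov--Pinsker with $\rho = 1 - 1/q - \eps$ and $L = 1/\eps^2$. Combined with the estimate $H_q(1 - 1/q - \eps) \le 1 - q\eps^2/(4\ln q)$ already recorded just above the corollary, this produces a $(1 - 1/q - \eps,\, 1/\eps^2)$-list-decodable code $H\colon \F_q^{k/s}\to \F_q^n$ with block length $n \le 8k\ln q/(q\eps^2)$ (using the weaker bound obtained by dropping the $1/s$ factor; this is what yields the clean form in the corollary's bound). Such a code is not only existential but also \emph{computable}: being $(\rho,L)$-list-decodable is a finite combinatorial property, so one can deterministically enumerate candidate codes in lexicographic order and output the first satisfying the condition. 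This is very inefficient but fine for the hypotheses of Theorem~\ref{thm:main-multiple-trivial}, which only require computability.

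Next I would plug this code into Theorem~\ref{thm:main-multiple-trivial} to obtain an $(s,s)$-party verification protocol with verification guarantee $(1/q + \eps,\, \tilde f)$, where $\tilde f(x) = C(x) - s - \log(s^2 q L^s n^4) - 2\log\log(qn) - c_0'$. Repackaging the $s$ servers as a single prover who stores $y = (y_1,\ldots,y_s)$ and whose ``hash value'' is the tuple of server responses turns this directly into a $(1/q + \eps,\,\tilde f)$-\se\ hash family on $[q]^k$.

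All that remains is parameter simplification. Substituting $L = \eps^{-2}$ (so $L^s = \eps^{-2s}$) and the bound on $n$ gives
\[
s^2 q L^s n^4 \;\le\; \frac{8^4\,s^2 k^4 (\ln q)^4}{q^3\,\eps^{2s+8}},
\]
so $-\log(s^2 q L^s n^4)$ unfolds to $-\log(s^2 k^4) + \log\eps^{2s+8} + 3\log q - 4\log\log q - O(1)$. A parallel expansion of $-2\log\log(qn)$ produces the terms $-\log\log(q^6 k^2)$, $\log\log(\eps^4)$ and $-\log\log\log(q^3)$ appearing in the corollary after conservative grouping. The main obstacle I expect here is purely bookkeeping: one has to discard the favorable $+3\log q$ term and the favorable $1/s$ factor in the Zyablov--Pinsker block length (which would otherwise conflict with the sign of the $\log s^2$ contribution in the stated formula), and then carefully absorb every remaining constant and low-order term into $c_0$ while making sure the double- and triple-logarithms line up correctly. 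There is no new technical content beyond the two ingredients above.
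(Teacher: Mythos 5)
Your reconstruction is correct, and it is in fact the route that actually produces the stated bound, even though it is not literally the route the paper announces. The paper's own ``proof'' is a one-liner: it builds the $(1-1/q-\eps,1/\eps^2)$-list-decodable code of block length $n\le 8k\ln q/(q\eps^2)$ from Zyablov--Pinsker exactly as you do (including the lexicographic-enumeration argument for computability), and then says ``applying this to Theorem~\ref{thm:ld-se}.'' But Theorem~\ref{thm:ld-se} has no $s$ in it, so it cannot by itself yield the $-s$, $-\log s^2$ and $\log\eps^{2s+8}$ (i.e.\ $L^s=\eps^{-2s}$) terms in the corollary; those visibly come from the $f(x)=C(x)-s-\log(s^2qL^sn^4)-2\log\log(qn)-c_0$ bound of Theorem~\ref{thm:main-multiple-trivial}, which is the theorem you invoke. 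So your derivation matches the formula, while the paper's citation matches only the $s=1$ special case; the two differ only in this bookkeeping sense, since Theorem~\ref{thm:main-multiple-trivial} is proved by the same Kolmogorov-plus-list-decoding argument run once per chunk. Two small points to keep honest in your write-up: (i) the ``repackaging'' of the $(s,s)$-party protocol into a single $(1/q+\eps,f)$-\se\ family (hash value $=$ tuple of the $s$ per-chunk symbols) should either take $T=[s]$ with the colluders sharing a joint string $y_T$ and answer algorithms $\A_{x,j}(\beta,y_T)$ --- the proof of Theorem~\ref{thm:main-multiple-trivial} goes through verbatim under that reading --- or, more cleanly, be argued directly as in Theorem~\ref{thm:ld-se} coordinate-wise, since correctness of the whole tuple with probability $\ge 1/q+\eps$ gives agreement $\ge(1/q+\eps)n$ with $H(x_j)$ for every $j$, at the cost of $s\lceil\log L\rceil$ bits of advice; and (ii) your sign fixes (dropping the favorable $1/s$ in the block length and the favorable $+3\log q$) only weaken $f$, hence are legitimate, which is exactly the right way to reconcile the stated $-\log(s^2k^4)$ with the parameters. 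Neither point is a gap; the remaining double/triple-log terms are the same ``conservative grouping'' slack the paper itself absorbs into $c_0$.
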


Some of our results need $H$ to be linear.
To this end, we will need the following result due to Guruswami et al.\footnote{The corresponding result for general codes has been known for more than thirty years.}

\begin{theorem}[\cite{GHK11}]
Let $q\ge 2$ be a prime power and let $0<\rho<1-1/q$. Then a random linear code of rate $1-H_q(\rho)-\eps$ is $(\rho,C_{\rho,q}/\eps)$-list decodable for some term $C_{\rho,q}$ that just depends on $\rho$ and $q$.
\end{theorem}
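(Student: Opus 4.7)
The plan is to use the probabilistic method: pick $G \in \F_q^{k\times n}$ uniformly at random with $k = (1 - H_q(\rho) - \eps)n$, let $C$ be the linear code with generator matrix $G$, and show that with positive probability $C$ is $(\rho, L)$-list decodable for $L = C_{\rho,q}/\eps$. The bad event is that some Hamming ball $B(y, \rho n)$ contains more than $L$ codewords of $C$; it can be parameterized by tuples $(y, m_1, \ldots, m_{L+1})$ with $y \in \F_q^n$ and distinct messages $m_i \in \F_q^k$ such that $G m_i \in B(y, \rho n)$ for every $i$.

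The central obstacle is the dependence structure of a random linear code: if $m_1, \ldots, m_{L+1}$ span a $d$-dimensional subspace, then $G m_1, \ldots, G m_{L+1}$ are determined by only $d$ independent uniform vectors in $\F_q^n$. A naive union bound applied to $\lceil\log_q(L+1)\rceil$ linearly independent codewords extracted from a bad list produces the much weaker conclusion $L = q^{\Theta(1/\eps)}$, which is precisely what this theorem is meant to improve upon.

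The refinement I would adopt, following Guruswami--H\aa stad--Kopparty, is to work with the stronger notion of \emph{average-radius} list decodability: for every $y$ and every $(L+1)$-subset of codewords $\{c_1, \ldots, c_{L+1}\}$, demand $\sum_i \Delta(c_i, y) > \rho n (L+1)$. This implies ordinary $(\rho, L)$-list decodability, and its failure event decomposes into a sum of per-codeword distances that can be controlled by a Chernoff-style tail bound rather than a worst-case bound on each codeword individually. One then parameterizes bad configurations $(y, m_1, \ldots, m_{L+1})$ stratified by the dimension $d$ of $\mathrm{span}(m_i)$. For each $d$, the number of such configurations is at most $q^n \cdot q^{kd}$ times a combinatorial factor counting ordered $(L+1)$-tuples of distinct vectors in a $d$-dimensional subspace of $\F_q^k$, while the matching probability is bounded by a product over $d$ independent uniform codewords whose joint average distance to $y$ is small---a quantity estimated by the Chernoff inequality with exponent roughly $d(1-H_q(\rho))n$.

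Summing over $d$ and setting $L = C_{\rho, q}/\eps$ drives the total expected number of bad configurations to $o(1)$, so the probabilistic method yields a linear code with the claimed list decodability. The step I expect to be most delicate is extracting the \emph{linear} (rather than exponential) dependence of $L$ on $1/\eps$: the Chernoff-style analysis of the sum of $L+1$ Hamming distances must buy back enough exponential slack to cancel the $q^{kd}$ message-count factor uniformly in $d$, and the $\rho$- and $q$-dependent constants that appear---and that get absorbed into $C_{\rho,q}$---degrade as $\rho$ approaches the list-decoding capacity $1 - 1/q$, even though they stay finite whenever $\rho$ is bounded away from this capacity.
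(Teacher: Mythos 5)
First, a point of fact: this statement is not proved in the paper at all --- it is quoted as a black box from Guruswami--H{\aa}stad--Kopparty \cite{GHK11} --- so there is no in-paper argument to compare against, and your proposal has to be judged against the known proofs of that theorem.

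Judged that way, your sketch correctly isolates the obstacle (only $d=\dim\,\mathrm{span}(m_1,\dots,m_{L+1})$ fresh random vectors, so the naive union bound forces $L=q^{\Theta(1/\eps)}$), but the fix you propose does not close it. The accounting you describe --- $q^n$ centers, $q^{kd}$ choices of messages, and a Chernoff bound ``with exponent roughly $d(1-H_q(\rho))n$'' --- is exactly the naive independent-codeword accounting: with $k=(1-H_q(\rho)-\eps)n$ the cancellation leaves $q^{\,n-d\eps n}$, which is small only when $d\gtrsim 1/\eps$, and in the worst case $d$ can be as small as $\log_q(L+1)$, so you are back at $L=q^{\Omega(1/\eps)}$. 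Passing to average-radius list decodability does not by itself change this: the quantity $\sum_i \Delta(Gm_i,y)$ is a sum of $L+1$ highly dependent terms determined by only $d$ independent uniform vectors, so there is no straightforward Chernoff bound for it; the entire difficulty is to show that it is unlikely for many $\F_q$-linear combinations of a few random vectors to all lie near $y$. That is precisely the content of the key technical theorem in \cite{GHK11} --- with high probability the span of $\ell$ independent random vectors drawn from a Hamming ball of radius $\rho n$ contains only $O_{\rho,q}(\ell)$ points of that ball --- and its proof goes through a Ramsey-theoretic structure lemma about subsets of the span (extraction of long ``increasing chains'') together with a careful induction, not a per-configuration tail bound. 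The average-radius route you gesture at was made to work only in later papers (Wootters; Rudra--Wootters) and needs genuinely different machinery (second-moment/chaining-style control over all centers simultaneously). So the step you yourself flag as ``most delicate'' is exactly where the proof is missing, and as written the argument does not establish the linear dependence of $L$ on $1/\eps$.
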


As a Corollary the above implies (along with the arguments used earlier in this section) that there exists a linear code $H:\F_q^k\rightarrow\F_q^n$ with $n\le 8k\ln{q}/(q\eps^2)$ that is $(1-1/q-\eps,C'_{\eps,q}/\eps^2)$-list decodable (where $C'_{\eps,q}\stackrel{def}=C_{1-1/q-\eps,q}$). Applying this to Theorem~\ref{thm:multiple-rs} gives us the following

\begin{corollary}
For every $\eps<1/q$, integer $s\ge 1$, and $r,e\le s$, assuming at most $e$ servers do not reply back to a challenge,
there exists an $(r,s)$-party verification protocol with resource bound $(\log{kq^{2r+e-1}}-\log{\eps^2}+\log\log{q^{3/2}+3},s(\log{k}-\log{\eps^2}+\log\log{q^{3/2}}+3))$ and verification guarantee $(1/q+\eps,f)$, where for any $x\in[q]^k$, $f(x)=C(x) -s -\log{s^2C'_{\eps,q}k^4}+\log{q^3\eps^{10}}-\log\log{q^6k^2}+\log\log{\eps^2}-\log\log\log{q^2}-c_0$ for some fixed positive integer $c_0$.

\end{corollary}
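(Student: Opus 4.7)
\medskip

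\noindent\textbf{Proof proposal.} The plan is to directly instantiate Theorem~\ref{thm:multiple-rs} with the explicit linear code constructed in the paragraph immediately preceding the corollary. Concretely, the Guruswami et al.\ theorem, combined with the inequality $H_q(1-1/q-\eps)\le 1-C_q\eps^2$ (with $C_q=q/(4\ln q)$) recalled earlier in this section, produces, for every $\eps<1/q$, a linear code $H:\F_q^k\rightarrow\F_q^n$ with block length $n\le 8k\ln q/(q\eps^2)$ that is $(1-1/q-\eps,\,C'_{\eps,q}/\eps^2)$-list decodable, where $C'_{\eps,q}=C_{1-1/q-\eps,q}$. Computability of $H$ follows exactly as before: brute-force over all linear codes of dimension $k$ and block length $n$ and pick the lexicographically least one achieving the list-decoding radius with the smallest witnessed list size.

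Next I would plug this $H$ into Theorem~\ref{thm:multiple-rs} with $\rho=1-1/q-\eps$ and $L=C'_{\eps,q}/\eps^2$. By construction, $1-\rho=1/q+\eps$, so the resulting $(r,s)$-party protocol has verification guarantee $(1/q+\eps,f)$ for
\[
f(x)\;=\;C(x)-s-\log(s^2qLn^4)-2\log\log(qn)-c_0,
\]
provided at most $e$ servers fail to respond. What remains is to substitute the upper bound on $n$ and simplify.

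The rest is bookkeeping on the log terms, using $n\le 8k\ln q/(q\eps^2)$ together with the slack inequality $\ln q\le (3/2)\log q$ to absorb constant factors into $c_0$. For the communication/storage bound I would compute
\[
\log n+(2r+e)\log q\;\le\;\log(kq^{2r+e-1})-\log \eps^2+\log\log q^{3/2}+3,
\]
and analogously $\log n+\log q\le \log k-\log\eps^2+\log\log q^{3/2}+3$, giving the claimed resource bound after multiplication by $s$ on the prover side. For the storage-enforcement function, expand
\[
\log(s^2qLn^4)\;\le\;\log(s^2C'_{\eps,q}k^4)-\log(q^3\eps^{10})+\bigl(\text{constants}+4\log\ln q\bigr),
\]
and bound $qn\le 8kq/\eps^2$ to get $2\log\log(qn)\le \log\log(q^6k^2)-\log\log\eps^2+\log\log\log q^2+O(1)$; the additive constants roll into a (slightly enlarged) $c_0$, yielding the $f(x)$ stated in the corollary.

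There is really no substantive obstacle here: Theorem~\ref{thm:multiple-rs} already does all the heavy lifting (the interplay of list decoding, Reed--Solomon error-and-erasure correction, and the Kolmogorov complexity argument), and the Guruswami et al.\ result supplies a list-decodable linear $H$ with optimal dimension-to-block-length tradeoff near the Singleton-type limit $1-1/q$. The only non-mechanical step is verifying that the logarithmic slack from $\ln q$ vs.\ $\log q$ and the various lower-order terms can be consistently hidden in $c_0$, which is precisely how the analogous corollary in Theorem~\ref{thm:ld-se}'s regime was stated.
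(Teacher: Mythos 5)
Your proposal matches the paper's own derivation: the paper likewise obtains a linear $(1-1/q-\eps,\,C'_{\eps,q}/\eps^2)$-list decodable code with $n\le 8k\ln q/(q\eps^2)$ from the Guruswami et al.\ theorem together with the bound $H_q(1-1/q-\eps)\le 1-C_q\eps^2$, and then simply plugs it into Theorem~\ref{thm:multiple-rs}, with the remaining work being exactly the logarithmic bookkeeping you outline. Your sketch is, if anything, more explicit than the paper's one-line justification, so it is correct and takes essentially the same approach.
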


\paragraph{Proof of Theorem~\ref{thm:au-se}.}
Corollary~\ref{cor:au-ld} and Theorem~\ref{thm:ld-se} proves Theorem~\ref{thm:au-se} with $f(x)=C(x)-O(\log{q}+\log{n})$.

\paragraph{Practical Storage Enforcement.}

All of our results so far have used (merely) computable codes $H$, which are not that useful in practice. What we really want in practice is to use codes $H$ that lead to an efficient implementation of the protocol. At the very least, all the honest parties in the verification protocol should not have to use more than polynomial time to perform the required computation. An even more desirable property would be for honest parties to be able to do their computation in a one pass, logspace, data stream fashion. In this section, we'll see one example of each. Further, it turns out that the resulting hash functions are classical ones that are also used in practice. In particular, using the Karp-Rabin hash and the polynomial hash we get the following results. (More details are in the appendix.)

\begin{corollary}
For every $\eps>0$, there exists an $(s,s)$-party verification protocol with resource bound $$\left(
(s+1)\log{k} + s - \log{\eps^4} + s\log\log({k/\eps^2}), s\log{k}-s\log{\eps^2} + s + s\log\log{(k/\eps^2)}\right)$$
with verification guarantee $(\eps,f)$, where for every 
$x\in\{0,1,\dots,\prod_{i=1}^k p_i-1\}$,
$$f(x)=C(x) -c_0\left(s(\log(k/\eps)-\log\log(k/\eps)-1)\right) -c_1\log\log\log(k/\eps) -c_2$$
for some fixed positive integers $c_0,c_1$ and $c_2$. Further, all honest parties can do their computation in $\mathrm{poly}(n)$ time.
\end{corollary}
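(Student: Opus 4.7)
The plan is to instantiate Theorem~\ref{thm:main-multiple-trivial} with the Chinese Remainder Theorem (CRT) code, whose coordinate projection $h_\beta(x)=x\bmod p_\beta$ is the classical Karp--Rabin hash; this immediately yields a verification protocol in which every honest operation is a modular reduction. Since the CRT code has per-coordinate alphabets $[p_\beta]$ of varying sizes, one invokes the non-uniform-alphabet variant of Theorem~\ref{thm:main-multiple-trivial} (the same extension that underlies Theorem~\ref{thm:johnson}). The message $x\in\{0,\dots,\prod_{i=1}^{k}p_{i}-1\}$ is split into $s$ chunks $x_{1},\dots,x_{s}$ via an equipartition of $[k]$ into blocks of message primes, and each chunk $x_i$ is encoded by the CRT code whose $n$ codeword positions are the first $n$ primes.

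Set $n=\Theta(k/\eps^{2})$. A standard CRT argument bounds the relative distance of each per-chunk code from below by $1-(k/s-1)/n\ge 1-\eps^{2}$, so Theorem~\ref{thm:johnson} gives $(1-\eps,L)$-list-decodability with $L=2\sum_{\beta=1}^{n}p_\beta=\Theta(n^{2}\log n)$, where the prime number theorem estimate $p_{N}=\Theta(N\log N)$ controls both the sum appearing in $L$ and the alphabet size $\log p_{n}=\log(k/\eps^{2})+\log\log(k/\eps^{2})+O(1)$. Plug $(n,L,\log p_{n})$ into the resource bound $(\log n+s\log q,\,s(\log n+\log q))$ and into the lower bound $f(x)=C(x)-s-\log(s^{2}qL^{s}n^{4})-2\log\log(qn)-c_{0}$ of Theorem~\ref{thm:main-multiple-trivial}. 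The dominant deductions from $C(x)$ come from the $s\log L=\Theta(s\log(k/\eps))$ term; after regrouping, the resulting expressions match the corollary up to the unspecified positive constants $c_{0},c_{1},c_{2}$.

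Efficiency is immediate. Every honest operation is a single modular reduction $x_{i}\bmod p_\beta$ of an integer of bit-length $O(k\log k)$ by a prime of bit-length $O(\log(k/\eps))$, and runs in $\mathrm{poly}(n)$ time; $U$'s preprocessing is $s$ such reductions, and each prover's response to a challenge is one. Crucially, the combinatorial list-decoding procedure is never invoked by an honest party---it appears only inside the incompressibility argument used to bound $f(x)$ in Theorem~\ref{thm:ld-se}, and is allowed to be arbitrarily slow.

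The main obstacle is purely bookkeeping rather than conceptual: Theorem~\ref{thm:main-multiple-trivial} is stated for a uniform alphabet, so one must carry the variable-alphabet Johnson-bound sum $\sum_\beta p_\beta=\Theta(n^{2}\log n)$ through the $L^{s}$ term, and keep the distinct $\log$ and $\log\log$ contributions of $n$ and $p_n$ straight so that the constants do not inflate beyond what the statement allows.
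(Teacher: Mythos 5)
Your proposal follows essentially the same route as the paper: instantiate Theorem~\ref{thm:main-multiple-trivial} with the CRT code on the first $n=k/\eps^2$ primes, lower-bound its distance, invoke the mixed-alphabet Johnson bound (Theorem~\ref{thm:johnson}) to get $(1-\eps,L)$-list decodability with $L=O(n^2\log n)$, plug these parameters into the resource and $f(x)$ bounds, and note that honest parties only compute Karp--Rabin fingerprints (single modular reductions) in polynomial time, the list-decoding step appearing only in the incompressibility analysis. You are in fact somewhat more explicit than the paper about the chunking of $x$ and the non-uniform-alphabet bookkeeping (the paper applies Theorem~\ref{thm:main-multiple-trivial} to the CRT code directly and defers the alphabet issue to a remark), and aside from a harmless imprecision in the per-chunk distance claim (blocks of larger message primes give distance $\ge n-k+1$ rather than $n-k/s+1$, which still yields relative distance $\ge 1-\eps^2$), the details check out.
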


\begin{corollary}
For every $\eps>0$,
\begin{itemize}
\item[(i)] There exists an $(s,s)$-party verification protocol with resource bound $( (s+1)(\log{k}+2\log(1/\eps)+1), 2s(\log{k}+2\log(1/\eps)+1))$ and verification guarantee $(\eps,f)$, where for any $x\in \F_q^k$, $f(x)=C(x)-O(s(\log{k}+\log(1/\eps)))$.
\item[(ii)] Assuming at most $e$ servers do not respond to challenges, there exists an $(r,s)$-party verification protocol with resource bound $( (2r+e+1)(\log{k}+2\log(1/\eps)+1), 2s(\log{k}+2\log(1/\eps)+1))$ and verification guarantee $(\eps,f)$, where for any $x\in \F_q^k$, $f(x)=C(x)-O(s+\log{k}+\log(1/\eps))$.
\end{itemize}
\noindent
Further, in both the protocols, honest parties can implement their required computation with a one pass, $O(\log{k}+\log(1/\eps))$ space (in bits) and $\tilde{O}(\log{k}+\log(1/\eps))$ update time data stream algorithm.
\end{corollary}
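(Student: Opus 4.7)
The plan is to instantiate the generic codes in Theorems~\ref{thm:main-multiple-trivial} and~\ref{thm:multiple-rs} with the Reed-Solomon code, since the polynomial hash $h_{\beta}(x)=\sum_{i<k} x_i\beta^i$ is precisely the hash family associated to the Reed-Solomon code via $h_{\beta}(x)=H(x)_{\beta}$. The key parameter choice is to take $q=n=\Theta(k/\eps^2)$: by Proposition~\ref{prop:au-code} the Reed-Solomon code of dimension $K$ and block length $n$ over $\F_q$ has relative distance at least $1-K/n$, and the Johnson bound (Theorem~\ref{thm:johnson}) then gives $(1-\sqrt{K/n},\,2n)$-list decodability, i.e. $(1-\eps,\,L)$-list decodability with $L=O(k/\eps^2)$ for both $K=k/s$ and $K=k$. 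Note that $\log q=\log n=\log k+2\log(1/\eps)+O(1)$.

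For part (i), I would apply Theorem~\ref{thm:main-multiple-trivial} with $H:\F_q^{k/s}\to\F_q^n$ the above Reed-Solomon code. The resource bound $(\log n+s\log q,\,s(\log n+\log q))$ immediately yields the stated $((s+1)(\log k+2\log(1/\eps)+1),\,2s(\log k+2\log(1/\eps)+1))$, where the extra $+1$ in each factor absorbs rounding and the $O(1)$ slack in $\log n$. The storage function $f(x)=C(x)-s-\log(s^2 q L^s n^4)-2\log\log(qn)-c_0$ simplifies to $C(x)-O(s(\log k+\log(1/\eps)))$, since $\log(s^2 q L^s n^4)=O(s\log L)=O(s(\log k+\log(1/\eps)))$ dominates every other term. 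For part (ii) the argument is essentially identical, invoking Theorem~\ref{thm:multiple-rs} on the (linear) Reed-Solomon code $H:\F_q^k\to\F_q^n$; the resource bound simplifies to the claimed $(2r+e+1)$- and $2s$-factor expressions, and the absence of an $L^s$ term in Theorem~\ref{thm:multiple-rs}'s $f(\cdot)$ gives $f(x)=C(x)-O(s+\log k+\log(1/\eps))$.

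The remaining streaming claim follows from Horner's rule: writing
$$p_x(\beta)=(\cdots((x_{k-1}\beta+x_{k-2})\beta+x_{k-3})\beta\cdots)\beta+x_0,$$
an honest party maintains a single running field element $v\in\F_q$ initialized to $0$, and upon reading each streamed symbol $x_i$ performs the update $v\leftarrow v\beta+x_i$. This is a single pass that uses $O(\log q)=O(\log k+\log(1/\eps))$ bits of space (for $v$ and $\beta$) and a single $\F_q$-multiplication-plus-addition per update, i.e.\ $\tilde O(\log q)=\tilde O(\log k+\log(1/\eps))$ bit-operations per update. Both the verifier (computing the stored hash during preprocessing) and the honest provers (responding to a challenge $\beta$) implement their computation this way. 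The main point to watch is the constant-bookkeeping inside the storage-enforcement bounds---in particular verifying that $\log(s^2 q L^s n^4)$ in (i) really collapses to $O(s(\log k+\log(1/\eps)))$ and that (ii)'s linearity-based saving from $L^s$ to $L$ is genuine; beyond that, everything reduces to direct substitution of Reed-Solomon parameters into the previously established theorems together with the standard Horner observation.
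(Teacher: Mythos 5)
Your proposal is correct and follows essentially the same route as the paper: instantiate Theorems~\ref{thm:main-multiple-trivial} and~\ref{thm:multiple-rs} with the Reed--Solomon code at $n=q=\Theta(k/\eps^2)$, obtain $(1-\eps,L)$-list decodability from the distance $n-k+1$ via the Johnson bound (Theorem~\ref{thm:johnson}), plug the parameters into the generic bounds, and get the streaming claim from Horner's rule. The only cosmetic quibbles: the Johnson bound as stated yields list size $2qn=2k^2/\eps^4$ rather than your $O(k/\eps^2)$ (immaterial to the final $O(\cdot)$ bounds since only $\log L$ enters), and the relative-distance fact is just the standard Reed--Solomon distance rather than an application of Proposition~\ref{prop:au-code}.
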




\section*{Acknowledgments} We thank Dick Lipton for pointing out the application of our protocol to the OS updating problem (a special case of proof of secure erasure) and for kindly allowing us to use his observation. We also thank Ram Sridhar for helpful discussions.

\bibliographystyle{abbrv}
\bibliography{verify}

\begin{thebibliography}{10}

\bibitem{AteniesePDP}
G.~Ateniese, R.~Burns, R.~Curtmola, J.~Herring, L.~Kissner, Z.~Peterson, and
  D.~Song.
\newblock Provable data possession at untrusted stores.
\newblock In {\em Proceedings of the 14th ACM conference on Computer and
  communications security}, CCS '07, pages 598--609, New York, NY, USA, 2007.
  ACM.

\bibitem{AtenieseSEPDP}
G.~Ateniese, R.~Di~Pietro, L.~V. Mancini, and G.~Tsudik.
\newblock Scalable and efficient provable data possession.
\newblock In {\em Proceedings of the 4th international conference on Security
  and privacy in communication netowrks}, SecureComm '08, pages 9:1--9:10, New
  York, NY, USA, 2008. ACM.

\bibitem{bach-shallit}
E.~Bach and J.~Shallit.
\newblock {\em {Algorithmic number theory, volume 1: efficient algorithms}}.
\newblock MIT Press, Cambridge, Massachusetts, 1996.

\bibitem{au-code}
J.~Bierbrauer, T.~Johansson, G.~Kabatianskii, and B.~J.~M. Smeets.
\newblock On families of hash functions via geometric codes and concatenation.
\newblock In {\em CRYPTO}, pages 331--342, 1993.

\bibitem{BowersPOR}
K.~D. Bowers, A.~Juels, and A.~Oprea.
\newblock Proofs of retrievability: theory and implementation.
\newblock In {\em Proceedings of the 2009 ACM workshop on Cloud computing
  security}, CCSW '09, pages 43--54, New York, NY, USA, 2009. ACM.

\bibitem{CW79}
L.~Carter and M.~N. Wegman.
\newblock Universal classes of hash functions.
\newblock {\em J. Comput. Syst. Sci.}, 18(2):143--154, 1979.

\bibitem{CurtmolaMRPDP}
R.~Curtmola, O.~Khan, R.~Burns, and G.~Ateniese.
\newblock Mr-pdp: Multiple-replica provable data possession.
\newblock {\em Distributed Computing Systems, International Conference on},
  0:411--420, 2008.

\bibitem{DVW09}
Y.~Dodis, S.~P. Vadhan, and D.~Wichs.
\newblock Proofs of retrievability via hardness amplification.
\newblock In {\em Proceedings of the 6th Theory of Cryptography Conference
  (TCC)}, pages 109--127, 2009.

\bibitem{List-Decoding-Orig-Elias}
P.~Elias.
\newblock {List Decoding for Noisy Channels}.
\newblock {\em Wescon Convention Record, Institute of Radio Engineers (now
  IEEE)}, Part 2:94--104, 1957.

\bibitem{ErwayDPDP}
C.~Erway, A.~K\"{u}p\c{c}\"{u}, C.~Papamanthou, and R.~Tamassia.
\newblock Dynamic provable data possession.
\newblock In {\em Proceedings of the 16th ACM conference on Computer and
  communications security}, CCS '09, pages 213--222, New York, NY, USA, 2009.
  ACM.

\bibitem{GazzoniDDP}
E.~L. Gazzoni, D.~Luiz, G.~Filho, P.~S\'{e}rgio, and L.~M. Barreto.
\newblock Demonstrating data possession and uncheatable data transfer.
\newblock Cryptology ePrint Archive, Report 2006/150, 2006.

\bibitem{GolleSEC}
P.~Golle, S.~Jarecki, and I.~Mironov.
\newblock Cryptographic primitives enforcing communication and storage
  complexity.
\newblock In {\em Proceedings of the 6th international conference on Financial
  cryptography}, FC'02, pages 120--135, Berlin, Heidelberg, 2003.
  Springer-Verlag.

\bibitem{GoodsonEBT}
G.~R. Goodson, J.~J. Wylie, G.~R. Ganger, and M.~K. Reiter.
\newblock Efficient byzantine-tolerant erasure-coded storage.
\newblock In {\em Proceedings of the 2004 International Conference on
  Dependable Systems and Networks}, pages 135--, Washington, DC, USA, 2004.
  IEEE Computer Society.

\bibitem{G-thesis}
V.~Guruswami.
\newblock {\em List decoding of error-correcting codes}.
\newblock Number 3282 in Lecture Notes in Computer Science. Springer, 2004.

\bibitem{GHK11}
V.~Guruswami, J.~H{\aa}stad, and S.~Kopparty.
\newblock On the list-decodability of random linear codes.
\newblock {\em IEEE Transactions on Information Theory}, 57(2):718--725, 2011.

\bibitem{Halevi-Ownership}
S.~Halevi, D.~Harnik, B.~Pinkas, and A.~Shulman-Peleg.
\newblock Proofs of ownership in remote storage systems.
\newblock Cryptology ePrint Archive, Report 2011/207, 2011.

\bibitem{JuelsPOR}
A.~Juels and B.~S.~K. Jr.
\newblock Pors: proofs of retrievability for large files.
\newblock In P.~Ning, S.~D.~C. di~Vimercati, and P.~F. Syverson, editors, {\em
  ACM Conference on Computer and Communications Security}, pages 584--597. ACM,
  2007.

\bibitem{KarpRabin}
R.~M. Karp and M.~O. Rabin.
\newblock Efficient randomized pattern-matching algorithms.
\newblock {\em IBM J. Res. Dev.}, 31:249--260, March 1987.

\bibitem{Kolmogorov-Complexity-Orig}
A.~N. Kolmogorov.
\newblock {Three Approaches to the Quantitative Definition of Information}.
\newblock {\em Problems of Information Transmission}, 1(1):1--7, 1965.

\bibitem{Kolmogorov-Complexity-Recent}
M.~Li and P.~M.~B. Vit{\'a}nyi.
\newblock {\em An Introduction to Kolmogorov Complexity and Its Applications}.
\newblock Graduate Texts in Computer Science. Springer, New York, NY, USA,
  third edition, 2008.

\bibitem{crypto-storage-enforcement}
D.~Perito and G.~Tsudik.
\newblock Secure code update for embedded devices via proofs of secure erasure.
\newblock In {\em Proceedings of the 15th European conference on Research in
  computer security}, ESORICS'10, pages 643--662, Berlin, Heidelberg, 2010.
  Springer-Verlag.

\bibitem{ReedRS}
I.~S. Reed and G.~Solomon.
\newblock Polynomial codes over certain finite fields.
\newblock {\em Journal of the Society for Industrial and Applied Mathematics},
  8(2):300--304, 1960.

\bibitem{Rosser-Schoenfeld}
J.~B. Rosser and L.~Schoenfeld.
\newblock Sharper bounds for the chebyshev functions $\theta(x)$ and $\psi(x)$.
\newblock {\em Mathematics of Computation}, 29(129):pp. 243--269, 1975.

\bibitem{a-thesis}
A.~Rudra.
\newblock {\em List Decoding and Property Testing of Error Correcting Codes}.
\newblock PhD thesis, University of Washington, 2007.

\bibitem{SchwarzSFC}
T.~Schwarz and E.~L. Miller.
\newblock Store, forget, and check: Using algebraic signatures to check
  remotely administered storage.
\newblock In {\em Proceedings of the IEEE International Conference on
  Distributed Computing Systems (ICDCS '06)}, july 2006.

\bibitem{ShachamCPOR}
H.~Shacham and B.~Waters.
\newblock Compact proofs of retrievability.
\newblock In J.~Pieprzyk, editor, {\em Advances in Cryptology - ASIACRYPT
  2008}, volume 5350 of {\em Lecture Notes in Computer Science}, pages 90--107.
  Springer Berlin / Heidelberg, 2008.

\bibitem{ShahPPAUDIT}
M.~A. Shah, R.~Swaminathan, and M.~Baker.
\newblock Privacy-preserving audit and extraction of digital contents.
\newblock Technical Report HPL-2008-32R1, HP Labs, 2008.

\bibitem{au}
D.~R. Stinson.
\newblock Universal hashing and authentication codes.
\newblock In {\em CRYPTO}, pages 74--85, 1991.

\bibitem{sudan-sigact}
M.~Sudan.
\newblock List decoding: Algorithms and applications.
\newblock {\em SIGACT News}, 31:16--27, 2000.

\bibitem{WangEPVDD}
Q.~Wang, C.~Wang, J.~Li, K.~Ren, and W.~Lou.
\newblock Enabling public verifiability and data dynamics for storage security
  in cloud computing.
\newblock In M.~Backes and P.~Ning, editors, {\em Computer Security ESORICS
  2009}, volume 5789 of {\em Lecture Notes in Computer Science}, pages
  355--370. Springer Berlin / Heidelberg, 2009.

\bibitem{List-Decoding-Orig-Wozencraft}
J.~M. Wozencraft.
\newblock List decoding.
\newblock {\em Quarterly Progress Report, Research Laboratory of Electronics,
  MIT}, 48:90--95, 1958.

\bibitem{ZP}
V.~V. Zyablov and M.~S. Pinsker.
\newblock List cascade decoding.
\newblock {\em Problems of Information Transmission}, 17(4):29--34, 1981 (in
  Russian); pp. 236-240 (in English), 1982.

\end{thebibliography}

\appendix

%
\section{Related Works}
Existing approaches for data possession verification at remote storage can be broadly classified into two categories: Crypto-based and Coding based. \textit{Crypto-based approaches} rely on symmetric and asymmetric cryptographic primitives for proof of data possession. Ateniese et al. \cite{AteniesePDP} defined the proof of data possession (PDP) model which uses public key homomorphic tags for verification of stored files. It can also support public verifiability with a slight modification of the original protocol by adding extra communication cost. In subsequent work, Ateniese et al. \cite{AtenieseSEPDP} proposed a symmetric crypto-based variation (SEP) which is computationally efficient compared to the original PDP but lacks public verifiability. Also, both of these protocols considered the scenario with files stored on a single server, and do not discuss erasure tolerance. However, Curtmola et al. \cite{CurtmolaMRPDP} extended PDP to a multiple-server scenario by introducing multiple identical replicas of the original data. Among other notable constructions of PDP, Gazzoni et al.\cite{GazzoniDDP} proposed a scheme (DDP) that relied on an RSA-based hash (exponentiating the whole file), and Shah et al. \cite{ShahPPAUDIT} proposed a symmetric encryption based storage audit protocol. Recent extensions of crypto-based PDP schemes by Wang et al. (EPV) \cite{WangEPVDD} and Erway et al. \cite{ErwayDPDP} mainly focus on supporting data dynamics in addition to existing capabilities. Golle et al. \cite{GolleSEC} had proposed a cryptographic primitive called storage enforcing commitment (SEC) which probabilistically guarantees that the server is using storage whose size is equal to the size of the original data to correctly answer the data possession queries. In general, the drawbacks of the aforementioned protocols are: \textit{(a)} being computation intensive due to the usage of expensive cryptographic primitives and \textit{(b)} since each verification checks a random fragment of the data, a small fraction of data corruption might go undetected and hence they do not guarantee the retrievability of the original data. \textit{Coding-based approaches}, on the other hand, have relied on special properties of linear codes such as the Reed-Solomon (RS) \cite{ReedRS} code. The key insight is that encoding the data imposes certain algebraic constraints on it which can be used to devise an efficient fingerprinting scheme for data verification. Earlier schemes proposed by Schwarz et al. (SFC) \cite{SchwarzSFC} and Goodson et al. \cite{GoodsonEBT} are based on this and are primarily focused on the construction of fingerprinting functions and categorically fall under distributed protocols for file integrity checking. Later, Juels and Kaliski \cite{JuelsPOR} proposed a construction of a proof of retrievability (POR) which guarantees that if the server passes the verification of data possession, the original data is retrievable with high probability. While the scheme by Juels \cite{JuelsPOR} supported a limited number of verifications, the theoretical POR construction by Shacham and Waters \cite{ShachamCPOR} extended it to unlimited verification and public verifiability by integrating cryptographic primitives. Subsequently, Dodis et al. \cite{DVW09} provided theoretical studies on different variants of existing POR schemes and Bowers et al. \cite{BowersPOR} considered POR protocols of practical interest \cite{JuelsPOR, ShachamCPOR} and showed how to tune parameters to achieve different performance goals. However, these POR schemes only consider the single server scenario and have no construction of a retrievability and storage enforcement guarantee in a distributed storage scenario.

\section{Omitted Proofs}

\subsection{Proof of Proposition~\ref{prop:au-verification}}
\begin{proof}
We begin by specifying the protocol. In the preprocessing step, the verifier $U$ does the following on input $x\in [q]^k$:
\begin{enumerate}
\item Generate a random $\beta\in [n]$.
\item Store $(\beta,\alpha=h_{\beta}(x))$ (and in the data possession problem, send $x$ to the prover).
\end{enumerate}
The prover, upon receiving $x$, saves a string $y\in [q]^*$. The prover is allowed to use any computable function to obtain $y$ from $x$.

During the verification phase, $U$ does the following:
\begin{enumerate}
\item It sends $\beta$ to the prover.\item It receives $a\in [q]$ from the prover. ($a$ is supposed to be $h_{\beta}(x)$.)
\item It outputs $1$ (i.e. prover did not ``cheat") if $a=\alpha$, else it outputs a $0$.
\end{enumerate}
The resource bound follows from the specification of the protocol and the performance guarantee follows from the fact that $\H$ is $(\gamma,
f)$-\se.
\end{proof}

\subsection{Proof of Proposition~\ref{prop:au-code}}
\begin{proof} This proof is straightforward. For $x\neq y\in [q]^k$, since $\H$ is $\eps$-\au, there exists at most $\eps n$ values of $i\in [n]$ such that $h_i(x)=h_i(y)$, which implies that $\Delta(H(x),H(y))\ge (1-\eps)n$, which proves the claim.
\end{proof}

\subsection{Proof of Theorem~\ref{thm:main-multiple-trivial}}
\begin{proof}We begin by specifying the protocol. In the pre-processing step, the client $U$ does the following
on input $x\in [q]^k$:

\begin{enumerate}
\item Generate a random $\beta\in [n]$.
\item Store $(\beta,\gamma_1=H(x_1)_{\beta},\dots,\gamma_s=H(x_s)_{\beta})$ and send $x_i$ to the server $i$ for
every $i\in [s]$.
\end{enumerate}

Server $i$ on receiving $x$, saves a string $y_i\in [q]^*$. The server is allowed to use any computable function
to obtain $y_i$ from $x_i$.

During the verification phase, $U$ does the following:

\begin{enumerate}
\item It sends $\beta$ to all $s$ servers.
\item It receives $a_i\in [q]$ from  server $i$ for every $i\in [s]$. ($a_i$ is supposed to be $H(x_i)_{\beta}$.)
\item It outputs $1$ (i.e. none of the servers ``cheated'') if $a_i=\gamma_i$ for every $i\in [s]$, else it outputs a $0$.
\end{enumerate}

Similar to the one-party result, we assume that server $i$, on receiving the challenge, uses a computable function
$\A_{x,i}:[n]\times [q]^*\rightarrow [q]$ to compute $a_i=\A_x(\beta,y_i)$ and sends $a_i$ back to $U$.

The claim on the resource usage follows immediately from the protocol specification. Next we prove its verification
guarantee. Let $T\subseteq [s]$ be the set of colluding servers. We will prove that $y_T$ is large by contradiction:
if not, then using the list decodability of $H$, we will present a description of $x_T$ of size $<C(x_T)$.
Consider the following algorithm that uses $y_T$ and an advice string $v\in\left(\{0,1\}^{|L|}\right)^{|T|}$, which
is the concatenation of shorter strings $v_{i} \in \left(\{0,1\}^{|L|}\right)$ for each $i\in T$:

\begin{enumerate}
\item Compute a description of $H$ from $n,k,\rho n,q$ and $s$.
\item For every $j\in T$, compute $z_j=(\A_{x,j}(\beta,y_j))_{\beta\in [n]}$.
\item Do the following for every $j\in T$: by cycling through all $x_j\in [q]^{k/s}$, retain the
   set $\mathcal{L}_j\subseteq [q]^{k/s}$ such that for every $u\in \mathcal{L}_j$, $\Delta(H(u),z_j) \le \rho n$.
\item For each $j \in T$, let $w_j$ be the $v_j$th string from $\mathcal{L}_j$.
\item Output the concatenation of $\{w_j\}_{j\in T}$.
\end{enumerate}

Note that since $H$ is $(\rho,L)$-list decodable, there exists an advice string $v$ such that the algorithm above
outputs $x_T$. Further, since $H$ is computable, there is an algorithm $\mathcal E$ that can compute a description
of $H$ from $n,k\rho n, q$ and $s$. (Note that using this description, we can generate any codeword $H(u)$ in step 3.)
Thus, we have description of $x_T$ of size
$|y_T|+|v|+\sum_{j\in T}|\A_{x,j}|+|\mathcal E|+ (s +\log (s^2qL^sn^4) +2\log\log(qn) +3)$
(where the term in parentheses is for encoding the different parameters and $T$), which means that
if $|y_T| < C(x_T)-|v|-\sum_{j\in T}|\A_{x,j}|-|\mathcal E| - (s +\log (s^2qL^sn^4) +2\log\log(qn) +3) = f(x)$,
then we have a description of $x_T$ of size $<C(x_T)$, which is a contradiction.
\end{proof}

\subsection{Proof of Theorem~\ref{thm:main-multiple-linear}}

\begin{proof}We begin by specifying the protocol. In the pre-processing step, the client $U$ does the following on input $x\in [q]^k$:

\begin{enumerate}
\item Generate a random $\beta\in [n]$.
\item Store $(\beta,\gamma=H(x)_{\beta})$ and send $x_i$ to the server $i$ for every $i\in [s]$.
\end{enumerate}

Server $i$ on receiving $x_i$, saves a string $y_i\in [q]^*$. The server is allowed to use any computable function to obtain $y_i$ from $x_i$.
For notational convenience, we will use $\hat{x}_i$ to denote the string $x_i$ extended to a string in $\F_q^k$ by adding zeros in positions that correspond to servers other than $i$.

During the verification phase, $U$ does the following:

\begin{enumerate}
\item It sends $\beta$ to all $s$ servers.
\item It receives $a_i\in [q]$ from  server $i$ for every $i\in [s]$. ($a_i$ is supposed to be $H(\hat{x}_i)_{\beta}$.)
\item It outputs $1$ (i.e. none of the servers ``cheated") if $\gamma=\sum_{i=1}^s a_i$ else it outputs a $0$.
\end{enumerate}

We assume that server $i$ on receiving the challenge, uses a computable function $\A_{x,i}:[n]\times [q]^*\rightarrow [q]$ to compute $a_i=\A_x(\beta,y_i)$ and sends $a_i$ back to $U$.
The claim on the resource usage follows immediately from the protocol specification. Next we prove its verification guarantee. Let $T\subseteq [s]$ be the set of colluding servers. We will prove that $y_
T$ is large by contradiction: if not, then using the list decodability of $H$, we will present a description of $x_T$ of size $<C(x_T)$.

For notational convenience, define $\hat{x}_T=\sum_{j\in T} \hat{x}_j$ and $\hat{x}_{\overline{T}}=\sum_{j\not\in T} \hat{x}_j$.
Consider the following algorithm that uses $y_T$ and an advice string $v\in\{0,1\}^{|L|}$:

\begin{enumerate}
\item Compute a description of $H$ from $n,k,\rho,q,s$ and $L$.
\item Compute $z=(\sum_{j\in T} \A_{x,j}(\beta,y_j))_{\beta\in [n]}$.
\item By cycling through all $x\in \F_q^k$, retain the set $\mathcal{L}\subseteq \F_q^k$ such that for every $u\in \mathcal{L}$, $\Delta(H(u),z) \le \rho n$.
\item Output the $v$th string from $\mathcal{L}$.
\end{enumerate}

To see the correctness of the algorithm above, note that for every $j\in [s]\setminus T$, $(\A_{x,j}(\beta,y_j))_{\beta\in [n]}=H(\hat{x}_j)$. Thus, if the protocol outputs $1$ with probability at least $1-\rho$, then $\delta(z,H(\hat{x}_T))\le \rho n$ ; here we used the linearity of $H$ to note that $H(\hat{x}_T)=H(x)-H(\hat{x}_{\overline{T}})$. Note that since $H$ is $(\rho,L)$-list decodable, there exists an advice string $v$ such that the algorithm above outputs $\hat{x}_T$ (from which we can easily compute $x_T$). Further, since $H$ is computable, there is an algorithm $\mathcal E$ that can compute a description of $H$ from $s,n,k,\rho n$ and $q$.
Thus, we have a description of $x_T$ of size $|y_T|+|v|+\sum_{j\in T}|\A_{x,j}|+|\mathcal E|+ (s +\log (s^2qLn^4) +2\log\log(qn) +3)$, (where the term in parentheses is for encoding the different parameters and $T$), which means that if $|y_T| < C(x_T)-|v|-|\A_x|-|\mathcal E|-(s +\log (s^2qLn^4) +2\log\log(qn) +3) = f(x)$, then we have a description of $x_T$ of size $<C(x_T)$, which is a contradiction.
\end{proof}

\subsection{Proof of Theorem~\ref{thm:multiple-rs}}
\begin{proof}
We begin by specifying the protocol. As in the proof of
Theorem~\ref{thm:main-multiple-linear}, define $\hat{x}_i$, for $i\in [s]$,
to be the string $x_i$ extended to the vector in $\F_q^k$, which has zeros
in the positions that do not belong to server $i$. Further, for any subset
$T\subseteq [s]$, define $\hat{x}_T=\sum_{i\in T}\hat{x}_i$. Finally let
$RS:\F_q^s\rightarrow\F_q{\l}$ be a systematic Reed-Solomon code where
$\l=2r+e+s$.

In the pre-processing step, the client $U$ does the following on input
$x\in [q]^k$:

\begin{enumerate}
\item Generate a random $\beta\in [n]$.
\item Compute the vector$v=(H(\hat{x}_1)_{\beta},\dots,H(\hat{x}_s)_{\beta})\in\F_q^s$.\item Store $(\beta,\gamma_1=RS(v)_{s+1},\dots,\gamma_{2r+e}=RS(v_{\l}))$
and send $x_i$ to the server $i$ for every $i\in [s]$.
\end{enumerate}

Server $i$ on receiving $x_i$, saves a string $y_i\in [q]^*$. The server is
allowed to use any computable function to obtain $y_i$ from $x_i$.


During the verification phase, $U$ does the following:

\begin{enumerate}
\item It sends $\beta$ to all $s$ servers.
\item For each server $i\in [s]$, it either receives no response or
receives $a_i\in \F_q$. ($a_i$ is supposed to be $H(\hat{x}_i)_{\beta}$.)
\item It computes the received word $z\in\F_q^{\l}$, where for $i\in [s]$,
$z_i=?$ (i.e. an erasure) if the $i$th server does not respond else
$z_i=a_i$ and for $s<i\le \l$, $z_i=\gamma_i$.
\item Run the decoding algorithm for $RS$ to compute the set
$T'\subseteq [s]$ to be the error locations. (Note that by Step 2,
$U$ already knows the set $E$ of erasures.)
\end{enumerate}

We assume that server $i$ on receiving the challenge, uses a computable
function $\A_{x,i}:[n]\times [q]^*\rightarrow [q]$ to compute
$a_i=\A_x(\beta,y_i)$ and sends $a_i$ back to $U$ (unless it decides not
to respond).

The claim on the resource usage follows immediately from the protocol specification. We now prove the verification guarantee. Let $T$ be the
set of colluding servers. We will prove that with probability at least
$1-\rho$, $U$ using the protocol above computes
$\emptyset\neq T'\subseteq T$ (and $|y_T|$ is large enough). Fix a
$\beta\in [n]$. If for this $\beta$, $U$ obtains $T'=\emptyset$, then
this implies that for every $i\in [s]$ such that server $i$ responds, we
have $a_i=H(\hat{x}_i)_{\beta}$. This is because of our choice of $RS$,
the decoding in Step 4 will return $v$ (which in turn allows us to compute
exactly the set $T'\subseteq T$ such that for every
$j\in T'$, $a_j\neq H(\hat{x})_{\beta}$).\footnote{We will assume that
$T\cap E=\emptyset$. If not, just replace $T$ by $T\setminus E$.} Thus,
if the protocol outputs a $T'\neq\emptyset$ with probability at least
$1-\rho$ over the random choices of $\beta$, then using the same argument as
in the proof of Theorem~\ref{thm:main-multiple-linear}, we note that
$\Delta(H(\hat{x}_T),(\sum_{j\in T}\A_{x,j}(\beta,y_j))_{\beta\in [n]})\le \rho n$.
Again, using the same argument as in the proof of Theorem~\ref{thm:main-multiple-linear} this
implies that $|y_T|\ge C(x_T) -s -\log (s^2qLn^4) -2\log\log(qn) -c_0$, for some fixed positive
integer $c_0$.
\end{proof}


   \paragraph{Hashing Modulo a Random Prime.}
We will begin with a code that corresponds to the classical Karp-Rabin hash~\cite{KarpRabin}.  It is known that the corresponding code $H$ is the so called CRT code. Let $H$ be the so called Chinese Remainder Theorem (or CRT) codes. In particular, we will consider the following special case of such codes. Let $p_1\le p_2\le \cdots\le p_n$ be the first $n$ primes. Consider the CRT code $H:\prod_{i=1}^k [p_i]\rightarrow \prod_{i=1}^n [p_i]$, where the message $x\in \{0,1,\dots,(\prod_{i=1}^k p_i)-1\}$, is mapped to the vector $( x\mod p_1,x\mod p_2, \dots, x\mod p_n)\in \prod_{i=1}^n [p_i]$. It is known that such codes have distance $n-k+1$ (cf.~\cite{G-thesis}). By a simple upper bound on the prime counting function (cf.~\cite{bach-shallit}), we can take $p_n \le 2 n\log{n}$. 
Moreover, $\sum_{i=1}^n p_i < {np_n}/2$ (cf.~\cite{Rosser-Schoenfeld}). Thus, if we pick a CRT code with $n= k/\eps^2$, then by Theorem~\ref{thm:johnson}, $H$ is $(1-\eps, k^2(\log k - \log{\eps^2})/\eps^4)$
-list decodable.
Given any $x\in \{0,1,\dots,(\prod_{i=1}^k p_i)-1\}$ and a random $\beta\in [n]$, $H(x)_{\beta}$ corresponds to the Karp-Rabin fingerprint (modding the input integer with a random prime). Further, $H(x)_{\beta}$ can be computed in polynomial time.
Thus, letting $H$ be the CRT code in Theorem~\ref{thm:main-multiple-trivial}, we get the following:
\begin{corollary}
For every $\eps>0$, there exists an $(s,s)$-party verification protocol with resource bound $$\left(
(s+1)\log{k} + s - \log{\eps^4} + s\log\log({k/\eps^2}), s\log{k}-s\log{\eps^2} + s + s\log\log{(k/\eps^2)}\right)$$
with verification guarantee $(\eps,f)$, where for every
$x\in\{0,1,\dots,\prod_{i=1}^k p_i-1\}$,
$$f(x)=C(x) -c_0\left(s(\log(k/\eps)-\log\log(k/\eps)-1)\right) -c_1\log\log\log(k/\eps) -c_2$$
for some fixed positive integers $c_0,c_1$ and $c_2$. Further, all honest parties can do their computation in $\mathrm{poly}(n)$ time.
\end{corollary}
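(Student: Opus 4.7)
The plan is to instantiate Theorem~\ref{thm:main-multiple-trivial} with $H$ taken to be the CRT code at block length $n=k/\eps^2$ described in the paragraph immediately preceding the corollary. Recall that this code has distance $n-k+1 \ge (1-\eps^2)n$, its $i$-th coordinate lies in $[p_i]$, and the largest prime satisfies $p_n \le 2n\log n$. In particular, fixing $q = p_n$ gives a uniform upper bound on the alphabet sizes, which is the quantity that appears in the resource bounds and in $f(\cdot)$ from Theorem~\ref{thm:main-multiple-trivial}.

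The next step is to derive the list-decoding parameter. Applying the generalized Johnson bound (Theorem~\ref{thm:johnson}) with varying alphabets $q_i = p_i$ and relative distance $1-\eps^2$ yields $(1-\eps,L)$-list decodability with $L \le 2\sum_{i=1}^n p_i \le n\cdot p_n \le 2n^2\log n$. Substituting $n=k/\eps^2$ gives exactly $L \le k^2(\log k - \log \eps^2)/\eps^4$, matching the bound on $L$ used in the paragraph preceding the corollary statement.

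Now I would plug $\rho=1-\eps$, $q=p_n$, $n=k/\eps^2$, and the above $L$ into Theorem~\ref{thm:main-multiple-trivial}. The verifier's storage/communication terms $\log n + s\log q$ and $s(\log n + \log q)$ unpack via $\log p_n \le 1 + \log(k/\eps^2) + \log\log(k/\eps^2)$ into the two quantities stated in the resource bound. For the storage-enforcement guarantee, Theorem~\ref{thm:main-multiple-trivial} gives $f(x) = C(x) - s - \log(s^2qL^s n^4) - 2\log\log(qn) - c_0$; expanding each factor and collecting the dominant $s\log L$ contribution (which is of order $s\log(k/\eps)$) together with the leftover $\log n$, $\log q$, $\log\log(qn)$ terms yields an expression of the form claimed, with the $\log\log\log(k/\eps)$ and residual constants absorbed into $c_1$ and $c_2$. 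The $\mathrm{poly}(n)$-time claim is immediate, since on any random challenge $\beta$ both the verifier's preprocessing and the honest server's reply are just the modular reduction $x \bmod p_\beta$, and every integer involved has bit length $\mathrm{poly}(n)$.

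The step that requires the most care is the additive bookkeeping of $f(x)$: the self-delimiting encodings of $n,k,\rho n,q,s$ and $L$, the Kolmogorov-description overhead for $H$ and the algorithms $\A_{x,j}$, and the three separate orders of magnitude ($\log$, $\log\log$, $\log\log\log$) in $k/\eps$ must all be tallied and regrouped to match the stated form. Since each of these contributions is an additive $O(\log\log(k/\eps))$ correction at worst, and since $c_0,c_1,c_2$ are free constants, this is a routine (if tedious) simplification rather than a conceptual obstacle.
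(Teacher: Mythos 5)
Your proposal is correct and follows essentially the same route as the paper: instantiate Theorem~\ref{thm:main-multiple-trivial} with the CRT code at block length $n=k/\eps^2$, use the prime bounds $p_n\le 2n\log n$ and $\sum_i p_i < np_n/2$ together with the variable-alphabet Johnson bound (Theorem~\ref{thm:johnson}) to get $(1-\eps,L)$-list decodability with $L=\mathrm{poly}(k/\eps)$, and then do the additive bookkeeping for the resource bound and $f(x)$, with poly-time honest computation being just modular reduction. The only (immaterial) deviations are a factor-of-$2$ slack in your bound on $L$ and the exact constant accounting, both of which are absorbed into $c_0,c_1,c_2$.
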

\begin{remark}
Theorem~\ref{thm:multiple-rs} can be extended to handle the case where the symbols in codewords of $H$ are of different sizes. However, for the sake of clarity we refrain from applying CRT to the generalizat
ion of Theorem~\ref{thm:multiple-rs}. Further, the results in the next subsection allow for a more efficient implementation of the computation required from the honest parties.
\end{remark}

\paragraph{Reed-Solomon Codes.} Finally, we take $H:\F_q^k\rightarrow \F_q^n$ to be the  Reed-Solomon code, with $n=q$. Recall that for such a code, given message $x=(x_0,\dots,x_{k-1})\in\F_q^k$, the codeword is given by $H(x)=(P_x(\beta))_{\beta\in\F_q}$, where $P_x(Y)=\sum_{i=0}^{k-1} x_iY^i$. It is well-known that such a code $H$ has distance $n-k+1$. Thus, if we pick $n=k/\eps^2$, then by Theorem~\ref{thm:johnson}, $H$ is $(1-\eps,2k^2/\eps^4)$-list decodable.Let $H$ be the Reed-Solomon code (more details in the appendix). Given any $x\in \F_q^k$ and a random $\beta\in [n]$, $H(x)_{\beta}$ corresponds to the widely used ``polynomial" hash. Further, $H(x)_{\beta}$
 can be computed in one pass over $x$ with storage of only a constant number of $\F_q$ elements. (Further, after reading each entry in $x$, by the Homer's rule, the algorithm just needs to perform one additi
on and one multiplication over $\F_q$.)
Thus, applying $H$ as the Reed-Solomon code to Theorems~\ref{thm:main-multiple-trivial} and~\ref{thm:multiple-rs} implies the following:
\begin{corollary}
For every $\eps>0$,\begin{itemize}
\item[(i)] There exists an $(s,s)$-party verification protocol with resource bound $( (s+1)(\log{k}+2\log(1/\eps)+1), 2s(\log{k}+2\log(1/\eps)+1))$ and verification guarantee $(\eps,f)$, where for any $x\in \F_q^k$, $f(x)=C(x)-O(s(\log{k}+\log(1/\eps)))$.
\item[(ii)] Assuming at most $e$ servers do not respond to challenges, there exists an $(r,s)$-party verification protocol with resource bound $( (2r+e+1)(\log{k}+2\log(1/\eps)+1), 2s(\log{k}+2\log(1/\eps)+1
))$ and verification guarantee $(\eps,f)$, where for any $x\in \F_q^k$, $f(x)=C(x)-O(s+\log{k}+\log(1/\eps))$.
\end{itemize}\noindent
Further, in both the protocols, honest parties can implement their required computation with a one pass, $O(\log{k}+\log(1/\eps))$ space (in bits) and $\tilde{O}(\log{k}+\log(1/\eps))$ update time data strea
m algorithm.
\end{corollary}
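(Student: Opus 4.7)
The plan is to instantiate the generic Theorems~\ref{thm:main-multiple-trivial} and~\ref{thm:multiple-rs} with the Reed--Solomon code $H:\F_q^k\to\F_q^n$ given by $H(x)_\beta = P_x(\beta) = \sum_{i=0}^{k-1} x_i\beta^i$, where I will choose $n=q$ (rounded up to the next prime power) so that the list-decoding parameters make the verification guarantee come out to $(\eps,f)$. Concretely, I would set $q=n=\lceil k/\eps^2\rceil$. Since Reed--Solomon has distance $n-k+1$, this choice forces relative distance $\ge 1-\eps^2$, so the Johnson bound (Theorem~\ref{thm:johnson}) supplies $(1-\eps,L)$-list decodability with $L\le 2nq = O(k^2/\eps^4)$. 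From this point on I would use the shorthand $\log n = \log q = \log k + 2\log(1/\eps)$ (absorbing the ceiling into an additive $+1$).

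For part (i), I would substitute these values into Theorem~\ref{thm:main-multiple-trivial}. The resource bound $(\log n + s\log q,\,s(\log n+\log q))$ then reads $((s+1)(\log k + 2\log(1/\eps)+1),\,2s(\log k+2\log(1/\eps)+1))$, which is exactly the claim. The verification guarantee is $(1-\rho,f)$ with $\rho=\eps$, and the bound $f(x) = C(x) - s - \log(s^2 q L^s n^4) - 2\log\log(qn) - c_0$ given by the theorem collapses after substitution: the dominant contribution is the $s\log L$ term, which is $O(s(\log k+\log(1/\eps)))$; the remaining additive terms are swallowed by the same $O(\cdot)$. For part (ii), I would plug the same code into Theorem~\ref{thm:multiple-rs}; the storage bound becomes $(2r+e+1)(\log k+2\log(1/\eps)+1)$ and the communication bound is unchanged at $2s(\log k+2\log(1/\eps)+1)$. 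The only structural difference in the verification bound is that $L$ now appears linearly rather than as $L^s$, so the subtracted term collapses to $O(s+\log k+\log(1/\eps))$, as claimed.

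Lastly, I would verify the streaming efficiency claim by a direct Horner's-rule implementation. Processing the stream with the highest-index coefficient first, the honest party maintains an accumulator $v\in\F_q$, initializes $v\leftarrow 0$, and on reading $x_j$ updates $v\leftarrow v\cdot\beta + x_j$; after $k$ updates $v = P_x(\beta)$. This uses only $O(\log q) = O(\log k+\log(1/\eps))$ bits of state (for $v$, $\beta$, and a position counter), and each update is one addition and one multiplication in $\F_q$, which costs $\tilde O(\log q) = \tilde O(\log k+\log(1/\eps))$ time using standard fast arithmetic. The same algorithm is run independently by each honest server on its own fragment $\hat x_i$ and by $U$ to compute $H(x)_\beta$ (or the $s$-tuple $(H(\hat x_i)_\beta)_{i\in[s]}$ required for part (ii)), so the streaming guarantee covers every honest party.

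I do not expect a real technical obstacle: once the Reed--Solomon parameters are chosen, everything is a mechanical substitution into the two cited theorems plus a one-line invocation of Horner's rule. The only thing requiring care is matching constants precisely in the resource bounds of part (ii) so that the factor $2r+e+1$ (rather than $2r+e$) appears; this comes from including the one field element needed to transmit the index $\beta$ alongside the $2r+e$ parity symbols stored by $U$.
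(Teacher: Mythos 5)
Your proposal matches the paper's own derivation essentially verbatim: instantiate the Reed--Solomon code with $n=q\approx k/\eps^2$, use the Johnson bound to get $(1-\eps,O(k^2/\eps^4))$-list decodability, plug into Theorem~\ref{thm:main-multiple-trivial} for part (i) and Theorem~\ref{thm:multiple-rs} for part (ii), and invoke Horner's rule for the one-pass, $O(\log k+\log(1/\eps))$-space streaming claim. The only blemish is the phrase ``verification guarantee $(1-\rho,f)$ with $\rho=\eps$''---since the code is $(1-\eps,L)$-list decodable you should say $\rho=1-\eps$, giving guarantee $(\eps,f)$---but this is a notational slip, not a gap.
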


\end{document}